\documentclass[journal]{IEEEtranTIE}
\usepackage{amsmath,amssymb,amsthm} 

\newtheorem{theorem}{Theorem}  
\usepackage{graphicx}
\usepackage[caption=false,font=footnotesize]{subfig}
\usepackage{cite}
\usepackage{picinpar}
\usepackage{amsmath}
\usepackage{url}
\usepackage{flushend}
\usepackage[latin1]{inputenc}
\usepackage{colortbl}
\usepackage{soul}
\usepackage{multirow}
\usepackage{pifont}
\usepackage{color}
\usepackage{booktabs}
\usepackage{alltt}
\usepackage[hidelinks]{hyperref}
\usepackage{enumerate}
\usepackage{siunitx}
\usepackage{breakurl}
\usepackage{epstopdf}
\usepackage{pbox}
\usepackage{bm} %
\newcommand{\vect}[1]{\bm{#1}}
\usepackage{amssymb} 
\usepackage{algorithm}
\usepackage{algpseudocode}
\usepackage{caption}
\usepackage{tabularx} 

\makeatletter
\@ifundefined{theorem}{%
	\newcounter{theorem} 
	\newenvironment{theorem}{%
		\refstepcounter{theorem} 
		\par\hspace{0em}\textit{Theorem~\thetheorem:}\ 
	}{\par}
}{%
	\renewenvironment{theorem}{%
		\refstepcounter{theorem}
		\par\hspace{0em}\textit{Theorem~\thetheorem:}\ 
	}{\par}
}
\makeatother

\makeatletter
\renewenvironment{proof}[1][\proofname]{\par
	\pushQED{\qed}%
	\normalfont \topsep6\p@\@plus6\p@\relax
	\trivlist
	\item[\hskip 1.4em 
	\itshape
	#1\@addpunct{.}]\ignorespaces
}{%
	\popQED\endtrivlist\@endpefalse
}
\makeatother

\begin{document}

\title{A Unified Hybrid Control Architecture for Multi-DOF Robotic Manipulators}

\author{Xinyu Qiao, Yongyang Xiong, Yu Han, and Keyou You, \IEEEmembership{Senior Member, IEEE}
\thanks{This work was supported in part by the National Natural Science Foundation
of China (62203254), and the State Key Laboratory of Autonomous Intelligent Unmanned Systems (the opening project number is  ZZKF2025-1-2). 
\emph{(Corresponding author: Yongyang Xiong.)}}
\thanks{ 
X. Qiao, Y. Xiong, and Y. Han are with the School of Intelligent Engineering, Sun Yat-sen University, Shenzhen 518107, P. R. China. 
E-mail: \texttt{qiaoxy5@mail2.sysu.edu.cn}; \texttt{xiongyy25@mail.sysu.edu.cn}; \texttt{hanyu25@mail.sysu.edu.cn}.}
\thanks{K. You is with the 
Department of Automation, Tsinghua University, Beijing 100084, P. R. China. 
E-mail: \texttt{youky@tsinghua.edu.cn}.}}

	\maketitle
	
	\begin{abstract}
		Multi-degree-of-freedom (DOF) robotic manipulators exhibit strongly nonlinear, high-dimensional, and coupled dynamics, posing significant challenges for controller design.
		To address these issues, this work proposes a unified hybrid control architecture that integrates model predictive control (MPC) with feedback regulation, together with a stability analysis of the proposed scheme.
		The proposed approach mitigates the optimization difficulty associated with high-dimensional nonlinear systems and enhances overall control performance.
		Furthermore, a hardware implementation scheme based on machine learning (ML) is proposed to achieve high computational efficiency while maintaining control accuracy.
		Finally, simulation and hardware experiments under external disturbances validate the proposed architecture, demonstrating its superior performance, hardware feasibility, and generalization capability for multi-DOF manipulation tasks.
	\end{abstract}
	
	\begin{IEEEkeywords}
		Model predictive control, machine learning, multi-DOF manipulators, hybrid control.
	\end{IEEEkeywords}

	\definecolor{limegreen}{rgb}{0.2, 0.8, 0.2}
	\definecolor{forestgreen}{rgb}{0.13, 0.55, 0.13}
	\definecolor{greenhtml}{rgb}{0.0, 0.5, 0.0}
	
	\section{Introduction}
	
	\IEEEPARstart{m}{ulti-DOF} manipulators have become indispensable actuators in precision assembly, flexible machining, and human--robot collaboration, as demonstrated in recent studies~\cite{cui2025task,zhang2023error}.  
	Meanwhile, as a core technology in robotic systems, trajectory tracking control plays a crucial role in realizing accurate and stable manipulator motion, as discussed in the literature~\cite{shen2025optimal}. 
	However, conventional single control algorithms are often insufficient to cope with the complex and nonlinear dynamics of robotic manipulators, particularly in high-dimensional settings~\cite{lu2018rendezvous,wang2014consensus}.
	To address this challenge, hybrid control frameworks that integrate complementary control principles have been extensively investigated in prior work~\cite{yu2025hybrid}.
	Many of these approaches, nonetheless, still rely heavily on accurate dynamic models, making them sensitive to modeling errors and external disturbances.
	To overcome these limitations, existing studies have focused on embedding model information into feedback control architectures, enabling proactive compensation of known dynamics while maintaining robustness against uncertainties.
	A wide range of control strategies have been combined with explicit model information, including proportional--derivative (PD) control~\cite{mueller2012model}, proportional--integral--derivative (PID) control~\cite{mahapatro2020robust}, active disturbance rejection control (ADRC)~\cite{ren2020structure}, $H_\infty$ control~\cite{yang2013finite}, adaptive control (AC)~\cite{liu2017adaptive,chen2020distributed}, and sliding-mode control (SMC)~\cite{wang2020timedelay,baek2022event}, to achieve high-precision, robust, and rapidly convergent trajectory tracking for nonlinear dynamical systems.
	These control paradigms enable proactive compensation of known dynamics while retaining feedback adaptability. 
	However, the lack of systematic optimization in controller parameters and structures limits their scalability to multi-DOF manipulators with complex nonlinear dynamic couplings.
	
	With its inherent capability to simultaneously handle nonlinear dynamics, state and input constraints, and performance optimization, MPC provides a systematic and scalable framework for real-time optimal control, offering an effective solution to the aforementioned challenge, as established in prior studies~\cite{dai2021robust,zhao2025policy}.
	In particular, the work in~\cite{kohler2020nmpc,kohler2020tac} presents a nonlinear MPC framework that unifies trajectory stabilization and tracking performance in a single optimization problem, thereby ensuring stability under varying reference trajectories.
	However, when applied to high-dimensional and strongly nonlinear systems, conventional MPC schemes that rely on explicit analytical models suffer from excessive computational complexity and heavy online optimization burdens.
	
	To address this issue, various model simplification techniques have been applied to the design of MPC controllers.
	For instance, the studies in~\cite{jin2023variable,wijayarathne2023contact} employ simplified impedance dynamics models to linearize complex nonlinear coupled systems.
	The work in~\cite{incremona2017mpc} adopts an inverse dynamics approach to simplify the original nonlinear system into a set of decoupled linear subsystems, while the method in~\cite{wang2024incremental} applies time-delay estimation to approximate system dynamics.
	Moreover, the approaches reported in~\cite{yang2023deadbeat,babayomi2023mfeso} replace complex nonlinear computations with multifrequency disturbance models using multifrequency extended state observers, and the studies in~\cite{nicolis2020operational,dai2021robust} adopt feedback linearization to transform manipulator dynamics into equivalent linear or nominally linear forms.
	However, as the model complexity and motion range increase, such approximations gradually fail to meet the control accuracy requirements, while the overall problem complexity remains largely unchanged.
	Consequently, experimental validations are mostly limited to low-dimensional systems, as observed in~\cite{wijayarathne2023contact,wang2024incremental,kang2024ultralocal,babayomi2023mfeso,yang2023deadbeat}.
	For multi-DOF manipulators, these control schemes are typically implemented in simplified operational spaces~\cite{jin2023variable,nicolis2020operational}, while joint-space control is often realized by constraining the number of active joints~\cite{incremona2017mpc,dai2021robust}, which ultimately compromises scalability and real-time applicability in practical robotic systems.
	
	To alleviate computational complexity and improve real-time implementability, recent studies have explored data-driven and learning-based optimal control paradigms that approximate the optimal policy directly from data~\cite{yang2025tractor,zhao2025data}.
	For example, the work in~\cite{han2013realtime} proposes a real-time MPC scheme employing an online self-organizing neural network, in which model parameters are adaptively updated during control.
	Similarly, the method in~\cite{ahmed2022rnn} develops an online RNN-based approach with EKF-based weight adaptation to identify system dynamics for model-free predictive current control.
	Although such adaptive training schemes provide self-learning capability, they often suffer from transient instability or abnormal end-effector motion in the early stages, making them unsuitable for safe multi-DOF manipulator control in practical applications.
	In contrast, the study in~\cite{rosolia2018lmpc} develops an iterative data-driven learning MPC that progressively improves control performance by updating the safe set and terminal cost function between iterations.
	Furthermore, a purely data-driven MPC framework is introduced in~\cite{berberich2021ddmpc}, which requires neither explicit modeling nor network training and relies solely on measured input--output data.
	However, these methods still impose a heavy online optimization burden, making it difficult to meet millisecond-level real-time requirements for robotic manipulators.
	
	The works in~\cite{wang2021mlmpc,wang2022annmpc} adopt offline network training within an ML framework, effectively mitigating the aforementioned challenges.
	Owing to strong nonlinear approximation and generalization capabilities, this ML-based approach can reliably emulate the behavior of MPC controllers in complex nonlinear systems, achieving a favorable trade-off between control efficiency and performance.
	However, learning efficiency and generalization capability are highly sensitive to the quality and distribution of sampled data.
	Therefore, improving data quality to enhance training efficiency, model accuracy, and generalization remains an important and active research issue.
	
	Motivated by the above discussion, this paper proposes a unified hybrid control architecture that integrates feedback control with MPC regulation for high-performance trajectory tracking of multi-DOF manipulators.
	Furthermore, an ML-based torque emulator is introduced to replace the online MPC optimizer, significantly reducing computational load while preserving control optimality.
	In contrast to existing works, the main contributions and novelties of this work are summarized as follows.
	
	\hangindent=1.5em
	\noindent 1) Existing MPC control algorithms face difficulties in obtaining optimal solutions when applied to multi-DOF robotic manipulators, which are characterized by high dimensionality, strong coupling, and nonlinear dynamics.
	This work presents a novel unified hybrid control architecture that seamlessly integrates feedback control with model predictive optimization and applies to robotic manipulators with arbitrary degrees of freedom.
	Furthermore, rigorous sufficient conditions are derived to ensure local asymptotic stability of the proposed hybrid controller.
	
	\hangindent=1.5em
	\noindent 2) To enable practical implementation of the proposed control framework, an ML-based torque emulator is developed to efficiently approximate the nonlinear optimal control law of multi-DOF manipulators through offline training.
	Furthermore, an adaptive sampling strategy is incorporated to improve data collection efficiency and quality, enhancing training efficiency while ensuring model accuracy and generalization capability.
	
	\hangindent=1.5em
	\noindent 3) Simulation experiments are conducted to enable large-scale, risk-free data acquisition and performance validation.
	The simulation results demonstrate that the proposed control architecture significantly outperforms conventional controllers, while the ML-based implementation achieves a desirable balance between control accuracy and computational efficiency.
	Subsequent hardware experiments further validate the feasibility and effectiveness of the proposed implementation scheme.
	
	The rest of this paper is organized as follows.
	Section~II presents the dynamics algorithm for multi-DOF robotic manipulators based on the recursive Newton-Euler formulation.
	Section~III proposes a hybrid control architecture that integrates feedback regulation with MPC and introduces an ML-based implementation scheme to enable real-time deployment.
	Section~IV provides simulation and hardware experiments on multi-DOF manipulators to validate effectiveness.
	Finally, Section~V concludes this paper.
	
	\textbf{Notation: }
	Bold lowercase and uppercase letters denote vectors and matrices.
	For a matrix $\bm{A}$, $\lambda_{\min}(\bm{A})$ and $\lambda_{\max}(\bm{A})$ denote its minimum and maximum eigenvalues.
	For symmetric matrices, $\bm{A}\succ 0$ and $\bm{A}\succeq 0$ indicate positive definiteness and positive semidefiniteness, respectively.
	The Euclidean norm is denoted by $\|\cdot\|_2$, and $\|\cdot\|_{\max}$ denotes the induced matrix $\infty$-norm.
	The symbols $SE(3)$ and $SO(3)$ denote the special Euclidean group and the special orthogonal group, respectively.
	The hat operator $\hat{\bm{x}}$ denotes the skew-symmetric matrix associated with $\bm{x}\in\mathbb{R}^3$, satisfying
	$\hat{\bm{x}}\bm{y}=\bm{x}\times\bm{y}$ for any $\bm{y}\in\mathbb{R}^3$.
	The notation $a \propto b$ indicates proportionality up to a positive constant.
	The operator $\odot$ denotes element-wise multiplication.
	Unless otherwise specified, all vectors are column vectors, and all inequalities between vectors are interpreted element-wise.

	\section{Dynamics Modeling}\label{sec:dynamics}
	
	This section presents the dynamic modeling framework for an $N$-DOF manipulator.
	Based on homogeneous transformations, the recursive Newton--Euler algorithm propagates kinematic quantities via forward recursion and computes joint torques and accelerations through backward recursion.
	
	\subsection{Dynamic Modeling Framework}
	
	Consider an $N$-DOF serial manipulator with revolute joints. The pose of the end-effector relative to the base frame $\{0\}$ is represented by the homogeneous transformation
	\begin{equation}\label{eq:homogeneous_pose}
		\bm{T}_0^N = \begin{bmatrix} \bm{R}_0^N & \bm{p}_0^N \\ \bm{0} & 1 \end{bmatrix} \in SE(3),
	\end{equation}
	where $\bm{R}_0^N \in SO(3)$ and $\bm{p}_0^N \in \mathbb{R}^3$ represent the orientation and position of frame $\{N\}$ with respect to the base frame $\{0\}$, respectively.
	
	More generally, the relative orientation between two frames $\{j\}$ and $\{i\}$ can be described by the rotation matrix
	\begin{equation}\label{eq:rot_matrix_definition}
		\bm{R}_j^i = \begin{bmatrix} \vect{x}_j^i & \vect{y}_j^i & \vect{z}_j^i \end{bmatrix} \in SO(3),
	\end{equation}
	where $\vect{x}_j^i, \vect{y}_j^i, \vect{z}_j^i \in \mathbb{R}^3$ are the orthonormal basis vectors of frame $\{i\}$ expressed in frame $\{j\}$.
	
	
	\subsection{Forward Kinematics and Angular Motion Propagation}
	
	For each link $L_i$, a local frame $\{i\}$ is attached at the driving joint, with $\vect{z}_i$ aligned with the joint axis.
	
	\begin{figure}[htbp]
		\centering
		\includegraphics[width=0.85\linewidth]{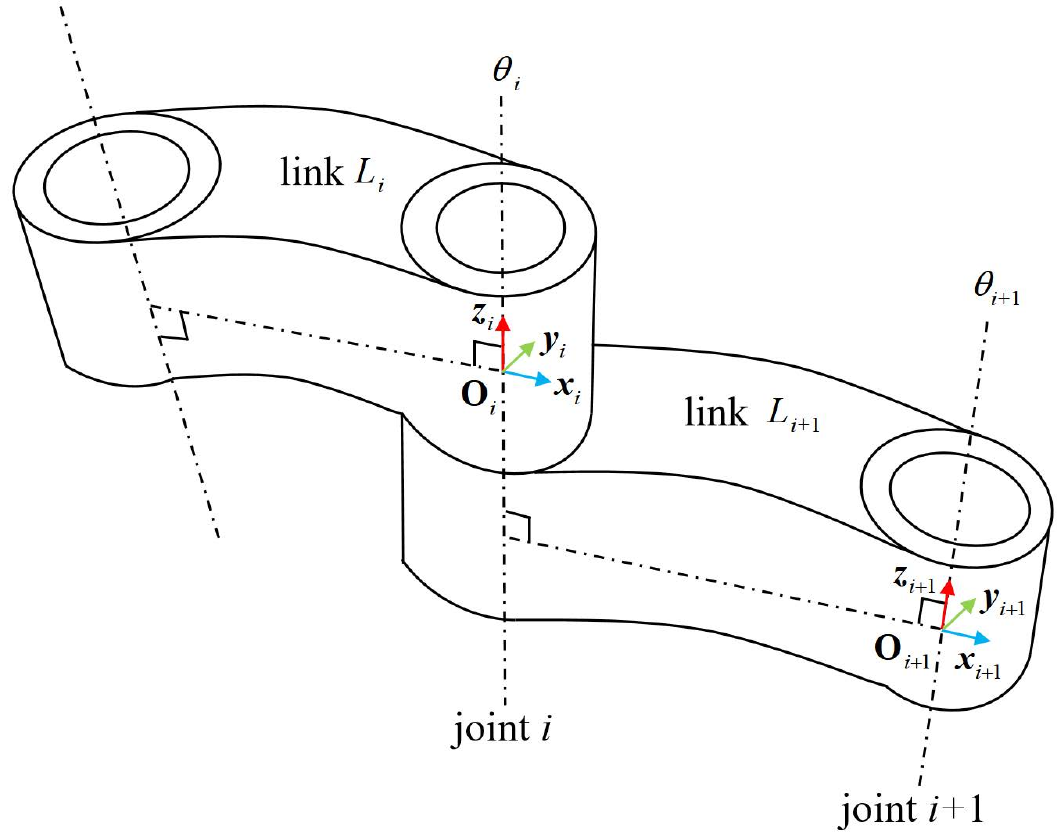}
		\caption{Spatial relationship between adjacent links $L_i$ and $L_{i+1}$ with their corresponding coordinate frames $\{i\}$ and $\{i+1\}$.}
		\label{fig:unified_rigid_body}
	\end{figure}
	
	Based on the rotation matrix $\bm{R}_{i+1}^i$, the angular velocity $\vect{\omega}$ and acceleration $\dot{\vect{\omega}}$ can be recursively expressed as:
	\begin{flalign}
		&\vect{\omega}_i^{i+1} = (\bm{R}_{i+1}^0)^\top \vect{\omega}_i^0 = (\bm{R}_{i+1}^i)^\top \!\left(\vect{\omega}_{i-1}^i + \dot{\theta}_i \vect{z}_0\right) \label{eq:ang_vel_body_recursion} & \\
		&\dot{\vect{\omega}}_i^{i+1} = (\bm{R}_{i+1}^0)^\top \dot{\vect{\omega}}_i^0 = (\bm{R}_{i+1}^i)^\top \!\left(\dot{\vect{\omega}}_{i-1}^i + \ddot{\theta}_i \vect{z}_0 + \dot{\theta}_i \hat{\vect{\omega}}_{i-1}^i \vect{z}_0\right) \label{eq:ang_acc_body_recursion} &
	\end{flalign}
	where $\vect{z}_i^0 = \bm{R}_i^0 \vect{z}_0$ denotes the $i$th joint axis expressed in the base frame, with $\vect{z}_0 = [0,\,0,\,1]^{\mathsf T}$, and $\dot{\theta}_i$ is the joint velocity.
	
	The linear acceleration at the origin of frame ${i}$, denoted by $\ddot{\vect{p}_{o_i}}$, can be expressed recursively as:
	\begin{equation}\label{eq:lin_acc_recursion_body}
		\ddot{\vect{p}}_{o_i}^{i+1} = (\bm{R}_{i+1}^i)^\top \left(
		\ddot{\vect{p}}_{o_{i-1}}^{i} +
		\dot{\hat{\bm{\omega}}}_{i-1}^i \vect{r}_{o_{i-1,i}}^i +
		\hat{\bm{\omega}}_{i-1}^i ( \hat{\bm{\omega}}_{i-1}^i  \vect{r}_{o_{i-1,i}}^i )
		\right)
	\end{equation}
	where $\vect{r}^i_{o_{i-1,i}}$ is the constant vector from $\{i-1\}$ to $\{i\}$ expressed in frame $\{i\}$.

	The center-of-mass acceleration of link $L_i$, denoted by $\ddot{\vect{p}}_{c_i}$, is the vector sum of the frame-origin acceleration, a tangential term from angular acceleration, and a centripetal term from angular velocity. It can be expressed recursively as:
	\begin{equation}\label{eq:com_acc_body}
		\ddot{\vect{p}}^{i+1}_{c_i} = \ddot{\vect{p}}^{i+1}_{o_i} + \dot{\hat{\bm{\omega}}}^{i+1}_i \vect{r}^{i+1}_{o_i c_i} + \hat{\bm{\omega}}^{i+1}_i (\hat{\bm{\omega}}^{i+1}_i \vect{r}^{i+1}_{o_i c_i})
	\end{equation}	
	where $\vect{r}^{i+1}_{o_i c_i}$ is the constant vector from $\{i\}$ to the CoM expressed in frame $\{i+1\}$.

	\begin{algorithm}[t]
		\caption{\textbf{Main Function:} Recursive Newton-Euler Forward Dynamics Algorithm (RNEFDA)}
		\label{alg:rnefda}
		\begin{algorithmic}[1]
			\Require $\vect{q}$, $\dot{\vect{q}}$, $\vect{f}_e^{N+1}$, $\vect{\tau}_e^{N+1}$, $\vect{g}^0$, $\boldsymbol{\omega}_0^0$, $\dot{\boldsymbol{\omega}}_0^0$, $\ddot{\vect{p}}_{o_i}^0$
			\Ensure $\ddot{\vect{q}}$
			
			\Function{RNEFDA}{%
				$\vect{q}, \dot{\vect{q}}, \vect{f}_e^{N+1}, \vect{\tau}_e^{N+1},
				\vect{g}^0, \boldsymbol{\omega}_0^0, \dot{\boldsymbol{\omega}}_0^0,
				\ddot{\vect{p}}_{o_i}^0$}
			
			\State // Step 1: Compute nonlinear effects $\vect{h}(\vect{q}, \dot{\vect{q}})$
			\State $\vect{\tau} \gets \text{RNEIDA}(\vect{q}, \dot{\vect{q}}, \mathbf{0},
			\boldsymbol{\omega}_0^0, \dot{\boldsymbol{\omega}}_0^0, \ddot{\vect{p}}_{o_i}^0,
			\vect{f}_e^{N+1}, \vect{\tau}_e^{N+1}, \vect{g}^0)$
			\For{$i \gets 1$ to $N$}
			\State $h_i \gets \tau_i$
			\EndFor
			
			\State // Step 2: Compute mass matrix $\mathbf{M}(\vect{q})$
			\For{$j \gets 1$ to $N$}
			\State Define $\mathbf{u}_j$ as the $j$-th unit vector in $\mathbb{R}^N$
			\State $\vect{\tau} \gets \text{RNEIDA}(\vect{q}, \mathbf{0}, \mathbf{u}_j,
			\boldsymbol{\omega}_0^0, \dot{\boldsymbol{\omega}}_0^0, \ddot{\vect{p}}_{o_i}^0,
			\mathbf{0}, \mathbf{0}, \mathbf{0})$
			\State Set $\mathbf{M}_j \gets \vect{\tau}$
			\EndFor
			\State // Step 3: Solve for joint accelerations
			\State \Return $\ddot{\vect{q}} \gets \text{Eq.}~\eqref{eq:forward_dynamics}$
			\EndFunction
		\end{algorithmic}
	\end{algorithm}
	
	\begin{algorithm}[t]
		\caption{\textbf{Subfunction:} Recursive Newton-Euler Inverse Dynamics Algorithm (RNEIDA)}
		\label{alg:rneida}
		\begin{algorithmic}[1]
			\Function{RNEIDA}{%
				$\vect{q}, \dot{\vect{q}}, \ddot{\vect{q}},
				\boldsymbol{\omega}_0^0, \dot{\boldsymbol{\omega}}_0^0, \ddot{\vect{p}}_{o_i}^0,
				\vect{f}_e^{N+1}, \vect{\tau}_e^{N+1}, \vect{g}^0$}
			
			\State // Forward recursion: kinematics propagation
			\For{$i \gets 1$ to $N$}
			\State Compute $\boldsymbol{\omega}_i^{i+1}$, $\dot{\boldsymbol{\omega}}_i^{i+1}$ using Eqs.~\eqref{eq:ang_vel_body_recursion} and \eqref{eq:ang_acc_body_recursion}
			\State Compute $\ddot{\vect{p}}_{o_i}^{i+1}$, $\ddot{\vect{p}}_{c_i}^{i+1}$ using Eqs.~\eqref{eq:lin_acc_recursion_body} and \eqref{eq:com_acc_body}
			\EndFor
			
			\State // Backward recursion: force and torque propagation
			\For{$i \gets N$ down to $1$}
			\State Compute ${\vect{f}}_{b,i}^{i+1}$, $\vect{\tau}_{b,i}^{i+1}$ using Eqs.~\eqref{eq:force_coord_transform} and \eqref{eq:torque_recursion}
			\EndFor
			
			\State // Joint torque extraction
			\For{$i \gets 1$ to $N$}
			\State $\tau_i \gets \text{Eq.}~\eqref{eq:joint_torque}$
			\EndFor
			
			\State \Return $\vect{\tau}$
			\EndFunction
		\end{algorithmic}
	\end{algorithm}
	
	\subsection{Backward Recursion}	
	
	The backward recursion propagates forces and torques from the end-effector to the base. For link $L_i$, the constraint force $\vect{f}_{b,i}$ consists of the transmitted force from its successor, the link's own inertial force, and the gravitational term:
	\begin{equation}\label{eq:force_coord_transform}
		\vect{f}^{i+1}_{b,i} = \bm{R}^{i+1}_{i+2}\vect{f}^{i+2}_{b,i+1} + m_{i}\left(\ddot{\vect{p}}^{i+1}_{ci} - \vect{g}^{i+1}\right)
	\end{equation}	
	where $m_i$ is the mass of link $L_i$, and $\vect{g}^0$ denotes the gravitational acceleration expressed in the base frame.
	
	If an external force $\vect{f}_e^{N+1}$ acts on the end-effector, the backward recursion is initialized by: 
	\begin{equation}\label{eq:force_initial_condition}
		\vect{f}_{b,N+1}^{N+1} = \vect{f}_e^{N+1},
	\end{equation}
	where $\vect{f}_{b,N+1}^{N+1}$ denotes the reaction force transmitted from the virtual link $L_{N+1}$ to the end-effector expressed in the frame $\{N+1\}$.
	
	Based on the Euler rotational dynamics, the constraint torque $\vect{\tau}_{b,i}$ can be expressed as:
	\begin{equation}\label{eq:torque_recursion}
		\begin{aligned}[b]
			\vect{\tau}^{i+1}_{b,i} &= \bm{R}^{i+1}_{i+2}\vect{\tau}^{i+2}_{b,i+1} - \vect{\tau}^{i+1}_{b,i} \times \vect{r}^{i+1}_{oici} \\
			&\quad + \left(\bm{R}^{i+1}_{i+2}\vect{f}^{i+2}_{b,i+1}\right) \times \vect{r}^{i+1}_{oici} \\
			&\quad + \overline{\bm{I}}^{\,i+1}_{i}\,\dot{\vect{\omega}}^{\,i+1}_{i} + \dot{\vect{\omega}}^{\,i+1}_{i} \times \left(\overline{\bm{I}}^{\,i+1}_{i}\vect{\omega}^{\,i+1}_{i}\right)
		\end{aligned}
	\end{equation}
	where $\overline{\bm{I}}^{0}_{i}$ denotes the spatial inertia tensor of link $L_i$ expressed in the base frame, and $\vect{\tau}^{0}_{b,i}, \vect{\tau}^{0}_{b,i+1}$ are the joint constraint torques exerted on link $L_{i}$ by $L_{i-1}$ and $L_{i+1}$, respectively.
	
	Like \eqref{eq:force_initial_condition}, the initial condition for the backward recursion of the joint torque can be expressed as
	\begin{equation}\label{eq:torque_initial_condition}
		\vect{\tau}^{\,i+1}_{b,i+1} = \vect{\tau}^{\,i+1}_{e},
	\end{equation}
	where $\vect{\tau}^{\,i+1}_{e}$ is the external torque applied to the end-effector.

	\subsection{Dynamic Analysis}	
	The joint-space dynamics are given by:
	\begin{equation}\label{eq:manipulator_dynamics}
		\bm{M}(\vect{q}) \vect{\ddot{q}} + \bm{C}(\vect{q}, \vect{\dot{q}}) \vect{\dot{q}} + \bm{G}(\vect{q}) = \vect{\tau}_q + \vect{\tau}_{ex}
	\end{equation}
	where $\vect{q} = [\theta_1, \theta_2, \cdots, \theta_N]^{\mathsf{T}}$.
	
	Let the generalized joint torque be 
	$\vect{\tau} = \vect{\tau}_q + \vect{\tau}_{ex}$, 
	where the $i$-th component can be obtained by projecting the spatial torque onto the joint axis:
	\begin{equation}\label{eq:joint_torque}
		\tau_{i} 
		= \left(\vect{\tau}^{0}_{b,i}\right)^{\mathsf{T}} \vect{z}^{0}_{i} 
		= \left(\vect{\tau}^{\,i+1}_{b,i}\right)^{\mathsf{T}} 
		\left(\bm{R}^{\,i}_{\,i+1}\right)^{\mathsf{T}} \vect{z}_0,
	\end{equation}
	with $\vect{z}^{0}_{i}$ denoting the $i$-th joint axis in the base frame and $\vect{z}_0 = [0,\,0,\,1]^{\mathsf{T}}$.

	Define the nonlinear term $\vect{h}(\vect{q}, \dot{\vect{q}}) \triangleq \bm{C}(\vect{q}, \dot{\vect{q}})\dot{\vect{q}} + \bm{G}(\vect{q})$,
	\eqref{eq:manipulator_dynamics} can be reformulated as:
	\begin{equation}\label{eq:forward_dynamics}
		\vect{\ddot{q}} = \bm{M}(\vect{q})^{-1} (\vect{\tau} - \vect{h}(\vect{q}, \vect{\dot{q}}))
	\end{equation}
	
	Algorithm~\ref{alg:rnefda} presents the Recursive Newton-Euler Forward Dynamics Algorithm (RNEFDA), which computes the joint accelerations $\ddot{\vect{q}}$ based on the current state and applied loads. 
		
	\section{Design of the Controller}\label{sec:architecture} 
	This section presents the overall design philosophy of the proposed unified MPC--feedback controller for multi-DOF manipulators, as illustrated in Fig.~\ref{fig:controller_architecture}.
	Based on Lyapunov stability analysis, explicit sufficient conditions are derived to guarantee local asymptotic stability of the resulting closed-loop system.
	To facilitate real-time implementation, an ML-based torque emulator is further developed to replace the online MPC optimizer, substantially reducing computational burden while preserving control optimality.
		
	\subsection{Controller Architecture}
	A generalized feedback mapping is defined as:
	\begin{equation}\label{eq:fb_general}
		\bm{\tau}^{\mathrm{fb}}(t) = \bm{\phi}\big(\bm{E}(t), t\big),
	\end{equation}
	the generalized error vector $\bm{E}(t)$ collects all feedback-relevant error features:
	\begin{equation}\label{eq:tracking_error_general}
		\begin{split}
			\bm{E}(t) = \big[ 
			& \bm{e}^{\mathsf{T}}(t),\,
			\dot{\bm{e}}^{\mathsf{T}}(t),\,
			\bm{e}^{(2)\mathsf{T}}(t),\,
			\dots, \\
			& \bm{\xi}^{\mathsf{T}}(t),\,
			\mathcal{F}\{\bm{e}\}^{\mathsf{T}}(t),\,
			\hat{\bm{d}}^{\mathsf{T}}(t)
			\big]^{\mathsf{T}},
		\end{split}
	\end{equation}
	where $\bm{e}(t)$, $\dot{\bm{e}}(t)$, and $\bm{e}^{(2)}(t)$ respectively represent the tracking errors in position, velocity, and acceleration, while $\bm{\xi}(t)$ denotes the integral error state. 
	$\mathcal{F}\{\bm{e}\}(t)$ denotes a generic operator extracting filtered or composite error features, e.g., sliding-mode variables and $\hat{\bm{d}}(t)$ represents estimated disturbances and model uncertainties.
	
	\begin{figure}[t]
		\centering
		\includegraphics[width=\columnwidth,scale=0.5]{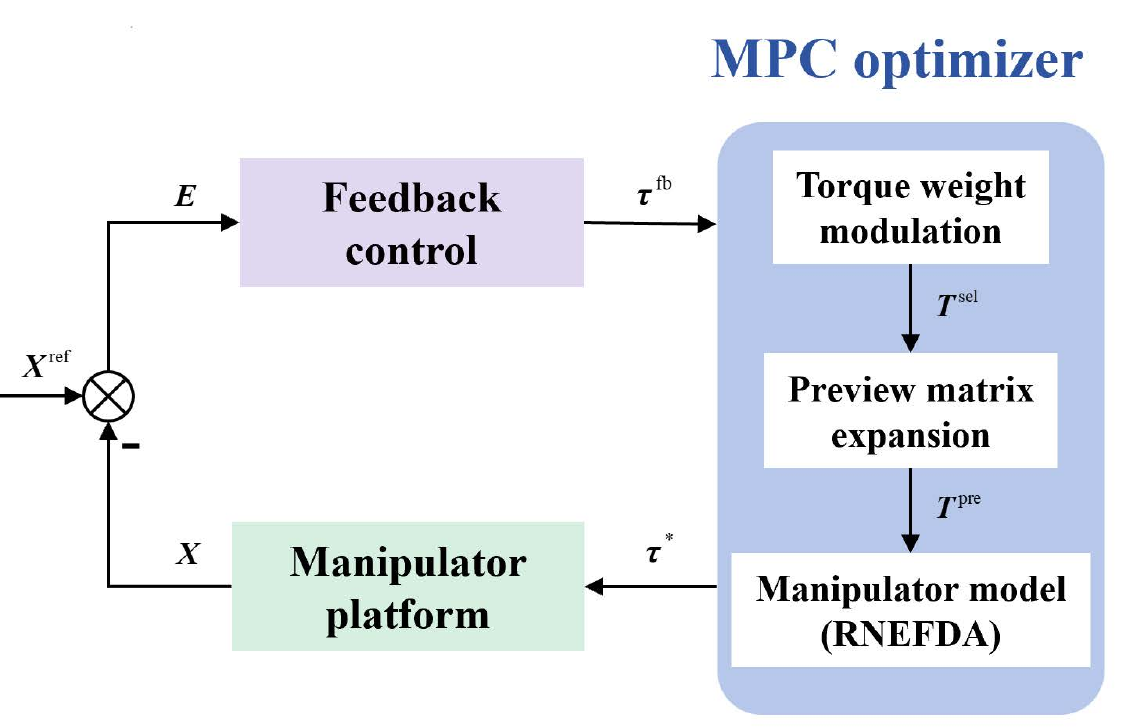}
		\caption{Block diagram of the proposed unified MPC--feedback control architecture.}
		\label{fig:controller_architecture}
	\end{figure}
	The feedback torque output $\bm{\tau}^{\mathrm{fb}}(t)$ is adaptively weighted by a selection vector $\bm{w}^{\mathrm{sel}} = [w_1^{\mathrm{sel}}, w_2^{\mathrm{sel}}, \dots, w_M^{\mathrm{sel}}] \in \mathbb{R}^M$ to generate the candidate torques $\bm{T}^{\mathrm{sel}}$, serving as initial inputs to the MPC layer. 
	Over a prediction horizon $T_N$, these torque candidates are expanded through temporally weighted matrices 
	$\bm{W}^{\mathrm{pre}} = \mathrm{diag}(\bm{w}_1^{\mathrm{pre}}, \dots, \bm{w}_{T_N}^{\mathrm{pre}})$, 
	forming a preview sequence $\bm{T}^{\mathrm{pre}}$ for predictive optimization, 
	where each $\bm{w}_k^{\mathrm{pre}} \in \mathbb{R}^{N \times N}$ is a diagonal matrix representing joint-wise temporal weights.

	At each sampling instant $k$, the MPC module predicts the system evolution based on the RNEFDA Algorithm and computes the optimal torque sequence $\bm{\tau}^*$ by minimizing a multi-objective cost that balances trajectory error and actuation effort:
	\begin{equation}\label{eq:mpc_optimization}
		\begin{aligned}
			\min_{\bm{\tau}_j^{\mathrm{pre}}} \quad & 
			\sum_{j=1}^{T_N} 
			\big( 
			\bm{E}(k+j)^{\mathsf{T}}\bm{W}_1\bm{E}(k+j)
			+ (\bm{\tau}_j^{\mathrm{pre}})^{\mathsf{T}}\bm{W}_2\bm{\tau}_j^{\mathrm{pre}}
			\big) \\[3pt]
			\text{s.t.} \quad & 
			\bm{X}(k+j+1)=\Gamma\!\big(\text{RNEFDA}(\bm{X}(k+j),\bm{\tau}_j^{\mathrm{pre}})\big),\\
			&\bm{X}_{\min}\le\bm{X}(k+j)\le\bm{X}_{\max},\\
			&\bm{\tau}_{\min}\le\bm{\tau}_j^{\mathrm{pre}}\le\bm{\tau}_{\max}.
		\end{aligned}
	\end{equation}
	where $\bm{W}_1$ and $\bm{W}_2$ are diagonal weighting matrices regulating error sensitivity and torque smoothness, $\Gamma(\cdot)$ denotes the discrete-time state transition induced by the forward dynamics with sampling interval $\Delta T$.
	
	The proposed architecture unifies stabilization and trajectory tracking within a single optimization layer, combining the responsiveness of feedback control with the optimality of MPC under dynamic constraints to ensure stability under varying references.
	
	\subsection{Stability Analysis}
	Consider the closed-loop robotic manipulator system governed by the control law~\eqref{eq:fb_general}.
	Assume that the mapping $\bm{\phi}$ satisfies the local positive stiffness condition
	\begin{equation}\label{eq:positive_stiffness}
		\frac{\partial \bm{\phi}}{\partial \bm{e}} > 0.
	\end{equation}
	Define the Lyapunov candidate function
	\begin{equation}\label{eq:lyapunov}
		V = \frac{1}{2} \dot{\bm{e}}^{\mathsf{T}} \bm{M}(\bm{q}) \dot{\bm{e}} 
		+ \frac{1}{2} \bm{e}^{\mathsf{T}} \bm{P} \bm{e} 
		+ \beta \bm{e}^{\mathsf{T}} \bm{M}(\bm{q}) \dot{\bm{e}}.
	\end{equation}
	where $\bm{P} \succ 0$ and $\beta > 0$ are design parameters.
	\begin{theorem}\label{thm:stability}
		If the control function $\bm{\phi}$ satisfies
		\begin{align}
			\lambda_{\min} \!\left( \frac{\partial \bm{\phi}}{\partial \dot{\bm{e}}} \right)
			&> \beta \lambda_{\max} (\bm{M}) + \|\bm{C}\|_{\max} + \epsilon, 
			\label{eq:stability_cond1} \\
			\lambda_{\min} \!\left( \frac{\partial \bm{\phi}}{\partial \bm{e}} \right)
			&> \left\| \frac{\partial \bm{G}}{\partial \bm{q}} \right\|_{\max} 
			+ \frac{1}{\beta} \!\left( \|\bm{P}\| + \beta \|\bm{C}\|_{\max} \right) + \epsilon,
			\label{eq:stability_cond2}
		\end{align}
		\begin{equation}\label{eq:stability_cond3}
			\begin{aligned}
				\Big\| \beta \frac{\partial \bm{\phi}}{\partial \bm{e}} - \bm{P} \Big\|
				&< \frac{2}{\beta}
				\Big[
				\lambda_{\min}\!\Big(\frac{\partial \bm{\phi}}{\partial \dot{\bm{e}}}\Big)
				- \beta \lambda_{\max}(\bm{M}) 
				- \|\bm{C}\|_{\max}
				\Big] \\
				&\quad \cdot
				\Big[
				\beta \lambda_{\min}\!\Big(\frac{\partial \bm{\phi}}{\partial \dot{\bm{e}}}\Big)
				- \beta^2 \Big\|\frac{\partial \bm{G}}{\partial \bm{q}}\Big\|_{\max}
				\Big],
			\end{aligned}
		\end{equation}
		the closed-loop system is locally asymptotically stable.
	\end{theorem}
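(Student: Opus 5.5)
The plan is a Lyapunov argument with the candidate $V$ in \eqref{eq:lyapunov}, applied to the closed-loop error dynamics linearized near the origin.

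\textbf{Error dynamics.} Take $\bm{e}=\bm{q}-\bm{q}_d$ (sign conventions fixed so that $\bm{\phi}$ acts as a restoring torque). Substituting the control law \eqref{eq:fb_general} into \eqref{eq:manipulator_dynamics}, and treating the reference-dependent terms as compensated, the closed loop near the origin reads
\begin{equation*}
\bm{M}\ddot{\bm{e}} + \bm{C}\dot{\bm{e}} + \bm{G}_q\bm{e} = -\bm{\phi}(\bm{e},\dot{\bm{e}}) + o(\|(\bm{e},\dot{\bm{e}})\|),
\end{equation*}
where $\bm{G}_q=\partial\bm{G}/\partial\bm{q}$. Since $\bm{\phi}(0,0,t)=0$ at equilibrium, a first-order Taylor expansion gives $\bm{\phi}\approx \bm{K}_p\bm{e}+\bm{K}_d\dot{\bm{e}}$ with $\bm{K}_p=\partial\bm{\phi}/\partial\bm{e}$ and $\bm{K}_d=\partial\bm{\phi}/\partial\dot{\bm{e}}$. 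Local asymptotic stability then follows if $V$ is positive definite and $\dot V$ is negative definite for this linearized system, with the remainder absorbed on a small ball.

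\textbf{Positivity of $V$.} Write $V=\tfrac12 \bm{x}^{\mathsf T}\bm{\Pi}\bm{x}$ with $\bm{x}=[\bm{e};\dot{\bm{e}}]$ and $\bm{\Pi}=\begin{bmatrix}\bm{P} & \beta\bm{M}\\ \beta\bm{M} & \bm{M}\end{bmatrix}$. By a Schur complement, $\bm{\Pi}\succ0$ iff $\bm{P}\succ\beta^2\bm{M}$. This should follow from $\beta$ small relative to $\|\bm{P}\|$, or be implied by \eqref{eq:stability_cond2}. If the stated conditions do not give it directly, I will add it explicitly as a standing requirement on the design parameters. Uniform bounds $\lambda_{\min}(\bm{M})\bm{I}\preceq\bm{M}\preceq\lambda_{\max}(\bm{M})\bm{I}$ hold on a compact neighborhood.

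\textbf{Derivative of $V$.} Differentiate using $\dot{\bm{M}}=\bm{C}+\bm{C}^{\mathsf T}$, so that $\tfrac12\dot{\bm{e}}^{\mathsf T}\dot{\bm{M}}\dot{\bm{e}}-\dot{\bm{e}}^{\mathsf T}\bm{C}\dot{\bm{e}}=0$ (skew-symmetry of $\dot{\bm{M}}-2\bm{C}$). Substituting $\bm{M}\ddot{\bm{e}}$ from the error dynamics gives
\begin{equation*}
\dot V = -\dot{\bm{e}}^{\mathsf T}(\bm{K}_d-\beta\bm{M})\dot{\bm{e}} - \beta\bm{e}^{\mathsf T}(\bm{K}_p+\bm{G}_q)\bm{e} - \dot{\bm{e}}^{\mathsf T}(\beta\bm{K}_p-\bm{P}+\bm{G}_q+\beta\bm{K}_d+\beta\bm{C})\bm{e} + \beta\bm{e}^{\mathsf T}\dot{\bm{M}}\dot{\bm{e}} + \text{h.o.t.}
\end{equation*}
The $\beta\bm{e}^{\mathsf T}\dot{\bm{M}}\dot{\bm{e}}$ and $\bm{C}$ terms are bounded by $\|\bm{C}\|_{\max}$ times the norms, and are allocated to the diagonal blocks. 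This yields $\dot V\le -\bm{z}^{\mathsf T}\bm{S}\bm{z}$ with $\bm{z}=[\|\bm{e}\|,\|\dot{\bm{e}}\|]^{\mathsf T}$ and
\begin{equation*}
\bm{S}=\begin{bmatrix} a_{11} & -\tfrac12 b\\ -\tfrac12 b & a_{22}\end{bmatrix},
\end{equation*}
where:
\begin{itemize}
\item $a_{22}=\lambda_{\min}(\bm{K}_d)-\beta\lambda_{\max}(\bm{M})-\|\bm{C}\|_{\max}$, which is positive by \eqref{eq:stability_cond1};
\item $a_{11}$ is of the form $\beta\lambda_{\min}(\bm{K}_d)-\beta^2\|\bm{G}_q\|_{\max}$ (or, via \eqref{eq:stability_cond2}, a positive $\bm{K}_p$-based bound), with \eqref{eq:stability_cond2} guaranteeing positivity;
\item $b=\|\beta\bm{K}_p-\bm{P}\|$.
\end{itemize}

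\textbf{Main obstacle.} The hardest step is arranging the cross-term bookkeeping so that the diagonal entries match the bracketed factors in \eqref{eq:stability_cond3} exactly. In particular, the $\bm{K}_d$ contributions to the cross term must be grouped and bounded so that what remains off-diagonal is precisely $\|\beta\bm{K}_p-\bm{P}\|$. I will first try Young's inequality with a free weight, choosing the weight so that the $2\times2$ condition $a_{11}a_{22}>b^2/4$ reproduces \eqref{eq:stability_cond3} up to its constant. With $\bm{S}\succ0$ in hand, the slack $\epsilon$ in \eqref{eq:stability_cond1}--\eqref{eq:stability_cond2} gives $\dot V\le -c\|\bm{x}\|^2+o(\|\bm{x}\|^2)$. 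This is negative definite on a sufficiently small neighborhood, and Lyapunov's theorem then yields local asymptotic stability.
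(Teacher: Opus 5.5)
There is a genuine gap, and it sits in the step you flag as the main obstacle. Rearranging the bookkeeping will not close it.

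First, your $\dot V$ has the wrong cross term. The term $\dot{\bm e}^{\mathsf T}\bm M\ddot{\bm e}$ contributes $-\dot{\bm e}^{\mathsf T}\bm K_p\bm e$ with coefficient one. The factor $\beta$ multiplies $\bm K_p$ only in the $\bm e$--$\bm e$ block, through $\beta\bm e^{\mathsf T}\bm M\ddot{\bm e}$. Using $\dot{\bm M}=\bm C+\bm C^{\mathsf T}$, the correct expression is
\begin{equation*}
\dot V=-\dot{\bm e}^{\mathsf T}(\bm K_d-\beta\bm M)\dot{\bm e}-\beta\bm e^{\mathsf T}(\bm K_p+\bm G_q)\bm e-\dot{\bm e}^{\mathsf T}(\bm K_p+\bm G_q-\bm P+\beta\bm K_d^{\mathsf T}-\beta\bm C)\bm e+\text{h.o.t.}
\end{equation*}
The paper's block $\bm H_{12}$ likewise carries $\bm F_e$, not $\beta\bm F_e$.

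Second, the $\bm K_d$-based form you give for $a_{11}$ matches the second bracket of \eqref{eq:stability_cond3}, but it does not come out of $\dot V$. The $\bm e$--$\bm e$ block gives $a_{11}=\beta\bigl(\lambda_{\min}(\bm K_p)-\|\bm G_q\|\bigr)$.

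With the correct entries, no Young weight turns $4a_{11}a_{22}>b^2$ into \eqref{eq:stability_cond3}. No bound of any kind can, because under \eqref{eq:stability_cond1}--\eqref{eq:stability_cond3} $\dot V$ itself can be indefinite. Take $N=1$, $\bm M\equiv1$, $\bm C\equiv0$, $\bm G\equiv0$, a constant set-point, $\bm\phi=11e+\dot e$, $\beta=0.1$, $\bm P=1$ and $\epsilon=0.01$. The three conditions read $1>0.11$, $11>10.01$ and $0.1<1.8$, and $V$ is positive definite. Yet $\dot V=-0.9\dot e^2-1.1e^2-10.1\,e\dot e$, which equals $8.1e^2>0$ along $\dot e=-e$. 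The closed loop $\ddot e+\dot e+11e=0$ is stable, but this $V$ cannot certify it.

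The paper's proof does not supply this step either. It only invokes diagonal dominance and a circle theorem for $\bm H$, and in the example $\bm H=\left[\begin{smallmatrix}-0.9&-5.05\\-5.05&-1.1\end{smallmatrix}\right]$ has a positive eigenvalue.

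Under your compensation assumption and for a set-point, you can still recover the conclusion from \eqref{eq:stability_cond1}--\eqref{eq:stability_cond2} alone, but only by changing the Lyapunov function. For symmetric gains these two conditions give $\bm K_d\succ0$ and $\bm K_p+\bm G_q\succ0$. Take $\bm P=\bm K_p+\bm G_q$: this cancels the $O(\|\bm K_p\|)$ cross term and leaves one of order $\beta$. A sufficiently small $\beta$, not the one in the statement, then gives $\bm S\succ0$, and \eqref{eq:stability_cond3} plays no role.

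Your two side remarks are correct and should be stated as explicit hypotheses. $V\succ0$ requires $\bm P\succ\beta^2\bm M$, which \eqref{eq:stability_cond2} does not imply, since that condition constrains $\partial\bm\phi/\partial\bm e$ and not $\bm P$. Compensating the reference terms is an added assumption that the argument genuinely needs. Under the pure feedback law, the terms $(\dot{\bm e}+\beta\bm e)^{\mathsf T}(\bm C\dot{\bm q}_d+\bm G_d+\bm M\ddot{\bm q}_d)$ are linear in the error. Unless something supplies them, the origin is not even an equilibrium, and they dominate any quadratic form near it. The paper simply leaves them inside $\bm D$.
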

	
	\begin{proof}
		See Appendix~\ref{app:stability}.
	\end{proof}

	\subsection{Machine Learning-Based MPC Torque Emulator}
	
	To enable real-time implementation of the proposed MPC-feedback scheme, an ML-based torque emulator is developed to enhance computational efficiency.
	The emulator takes as input an augmented state vector
	\begin{equation}\label{eq:state_vector_emulator}
		\bm{s} = (\bm{q}, \bm{e}, \dot{\bm{q}}, \dot{\bm{e}}, \bm{\tau}_{\mathrm{fb}}),
	\end{equation}
	which incorporates the manipulator's kinematic, dynamic, and actuation information. For each sampled state, the expert MPC controller provides the corresponding optimal torque~$\bm{\tau}^\ast$. The resulting expert dataset is defined as
	\begin{equation}\label{eq:expert_dataset_emulator}
		\mathcal{D}_{\mathrm{expert}} = \{(\bm{s}_q, \bm{\tau}_q^{\ast})\}_{q=1}^{Q},
	\end{equation}
	where $\bm{\tau}^{\ast}_q$ is the optimal torque vector obtained from MPC at state $\bm{s}_q$. 
	
	Let $f$ denote the target mapping from the state space to the desired control output space,
	which is approximated by a fully connected multilayer perceptron (MLP) that maps the state vector $\bm{s}$ to the desired torque output $\bm{\tau}^\ast$.
	The network consists of multiple hidden layers with ReLU activations and a linear output layer for torque prediction.
	Input--output normalization enhances numerical stability and accelerates convergence during training.
	
	The model parameters~$\theta$ are optimized using mini-batch gradient descent with the Adam optimizer to minimize the mean squared error (MSE):
	\begin{equation}\label{eq:mse_loss}
		\mathcal{L}_{\mathrm{MSE}}(\theta) = \frac{1}{B} \sum_{i=1}^{B} \| \bm{\tau}^{\ast}_i - f_\theta(\bm{s}_i) \|_2^2,
	\end{equation}
	where $B$ denotes the batch size. 
	
	To ensure both generalization and sampling efficiency, a structured data collection strategy is employed. 
	Multiple feedback control schemes are executed under diverse operating conditions to construct the dataset~$\mathcal{D}_{\mathrm{input}}$. 
	This procedure ensures physical consistency of all collected data and eliminates infeasible states that may arise from purely random sampling.
	
	According to the \textit{Lipschitz continuity theorem}, the variation of $f$ within region $\mathcal{S}_i$ is bounded by $L\Delta_i$, where $L$ is the Lipschitz constant and $\Delta_i$ denotes the geometric diameter of the region.  
	As a typical nonparametric approximation problem, the local modeling error decreases with the regional sample size $n_i$ at a power-law rate.  
	Therefore, the regional difficulty coefficient is defined as
	\begin{equation}
		\gamma_i = \frac{1}{(\delta_i - L\Delta_i + \varepsilon)^p},
	\end{equation}
	where $\delta_i$ represents the admissible error tolerance, $\varepsilon>0$ is a small constant introduced to ensure numerical stability, and $p>0$ is a shaping parameter that adjusts the sensitivity to regional difficulty.

	\begin{theorem}	\label{thm:optimal_sampling}
		Define $\rho_i = \gamma_i A_i$ as the sampling importance coefficient of region $\mathcal{S}_i$, where $A_i$ denotes the temporal proportion coefficient. Then, the optimal sampling weight assigned to region $\mathcal{S}_i$ is given by 
		\begin{equation}
			w_i^* = \frac{\rho_i^{2/3}}{\sum_{j=1}^K \rho_j^{2/3}}.
		\end{equation}
	\end{theorem}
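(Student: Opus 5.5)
The plan is to recast the sampling allocation as a constrained convex program over the simplex and then solve its KKT conditions in closed form. The exponent $2/3$ in the claimed weight tells me which error model to use. If the objective is $\sum_i \rho_i w_i^{-a}$, stationarity gives $w_i\propto\rho_i^{1/(1+a)}$, and matching $1/(1+a)=2/3$ forces $a=1/2$. So I will model the local approximation error in region $\mathcal{S}_i$ as decaying at the nonparametric power-law rate $n_i^{-1/2}$ mentioned just before the theorem. This is a modelling assumption that I will state explicitly, since the text only says "a power-law rate".

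\textbf{Step 1 (objective).} Fix a total budget of $n$ samples and write $n_i=w_i n$ with $w_i\ge 0$ and $\sum_{i=1}^K w_i=1$. The regional difficulty $\gamma_i$ scales the error, so the effective error in $\mathcal{S}_i$ is taken as $\mathcal{E}_i \propto \gamma_i n_i^{-1/2}$. The temporal proportion coefficient $A_i$ weights how often the trajectory visits $\mathcal{S}_i$. The global expected error is therefore
\[
J(\bm{w}) = \sum_{i=1}^K A_i\gamma_i (w_i n)^{-1/2} = n^{-1/2}\sum_{i=1}^K \rho_i w_i^{-1/2},
\]
and the constant $n^{-1/2}$ can be dropped.

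\textbf{Step 2 (convexity and interior optimum).} Each map $w_i\mapsto \rho_i w_i^{-1/2}$ is strictly convex on $(0,1]$ whenever $\rho_i>0$. Positivity of $\rho_i$ holds because $\gamma_i>0$ and $A_i>0$, assuming $\delta_i-L\Delta_i+\varepsilon>0$, which I note as a standing condition. Each term also blows up as $w_i\to 0^+$. Hence the minimizer over the simplex exists, is unique, and lies in the interior, so the nonnegativity constraints are inactive. Only the equality constraint needs a multiplier.

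\textbf{Step 3 (Lagrangian).} Form $\mathcal{L}=\sum_i\rho_i w_i^{-1/2}+\mu\big(\sum_i w_i-1\big)$. Setting $\partial\mathcal{L}/\partial w_i=0$ gives $\tfrac12\rho_i w_i^{-3/2}=\mu$, so $w_i=(\rho_i/2\mu)^{2/3}\propto \rho_i^{2/3}$. Normalizing with $\sum_j w_j=1$ eliminates $\mu$ and yields $w_i^*=\rho_i^{2/3}/\sum_{j}\rho_j^{2/3}$. Strict convexity of $J$ makes this stationary point the global minimizer.

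\textbf{Main obstacle.} The calculus is routine; the difficulty lies in Step 1. There I must justify the precise error model: the multiplicative role of $\gamma_i$, the additive weighting by $A_i$, and above all the rate exponent $1/2$. For that I would first appeal to the standard Monte-Carlo/nonparametric rate $n_i^{-1/2}$ for the local estimation error. I would also remark that a general rate $n_i^{-a}$ gives $w_i^*\propto\rho_i^{1/(1+a)}$, so the stated formula corresponds exactly to $a=1/2$.
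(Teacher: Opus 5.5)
Your proposal is correct and follows essentially the same route as the paper: the paper also postulates the leading error term $\frac{C}{\sqrt{N_{\text{total}}}}\sum_{i=1}^K \rho_i w_i^{-1/2}$, justifying it by CLT/Hoeffding-type bounds where you assume an $n_i^{-1/2}$ rate, and then solves the same Lagrangian stationarity condition $w_i=(\rho_i/2\lambda)^{2/3}$ before normalizing. Your explicit checks that $\rho_i>0$ and that the objective blows up at the boundary (so the minimizer is interior) are a small gain in rigor over the paper, which infers a unique minimizer from strict convexity alone.
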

	
	\begin{proof}
		See Appendix~\ref{app:sampling}.
	\end{proof}
	
	Input states are sampled by region according to Theorem~\ref{thm:optimal_sampling} and processed by the proposed MPC solver to generate the supervised dataset~$\mathcal{D}_{\mathrm{sample}}$.

	\section{Experiments and Results}\label{sec:experiments}
	
	This section validates the proposed unified hybrid control framework on the UR5 manipulator via simulation and hardware experiments.
	All controllers share the same plant, reference trajectories, constraints, and disturbance conditions, and differ only in whether the proposed MPC enhancement and its learning-based realization are enabled.
	This setup enables a controlled ablation analysis that disentangles the structural contribution of the proposed architecture from the underlying feedback law.
	Performance is evaluated in terms of tracking accuracy, impulse robustness, out-of-distribution generalization, and real-time feasibility.
	Accordingly, the following configurations are considered within the same unified framework:
	\begin{itemize}
		\setlength{\itemsep}{1pt}
		\setlength{\topsep}{2pt}
		\setlength{\leftmargin}{1.2em}
		\item \textbf{FB:} feedback-only baseline using one of \{PD, PID, ADRC, $H_\infty$, SMC, MRAC\}.
		\item \textbf{HMPC:} the proposed MPC-enhanced configuration built on the same FB law.
		\item \textbf{LMPC:} a learning-based realization of HMPC designed for real-time hardware implementation via an ML torque emulator.
	\end{itemize}

	\begin{figure}[t]
		\centering
		\includegraphics[width=0.95\columnwidth]{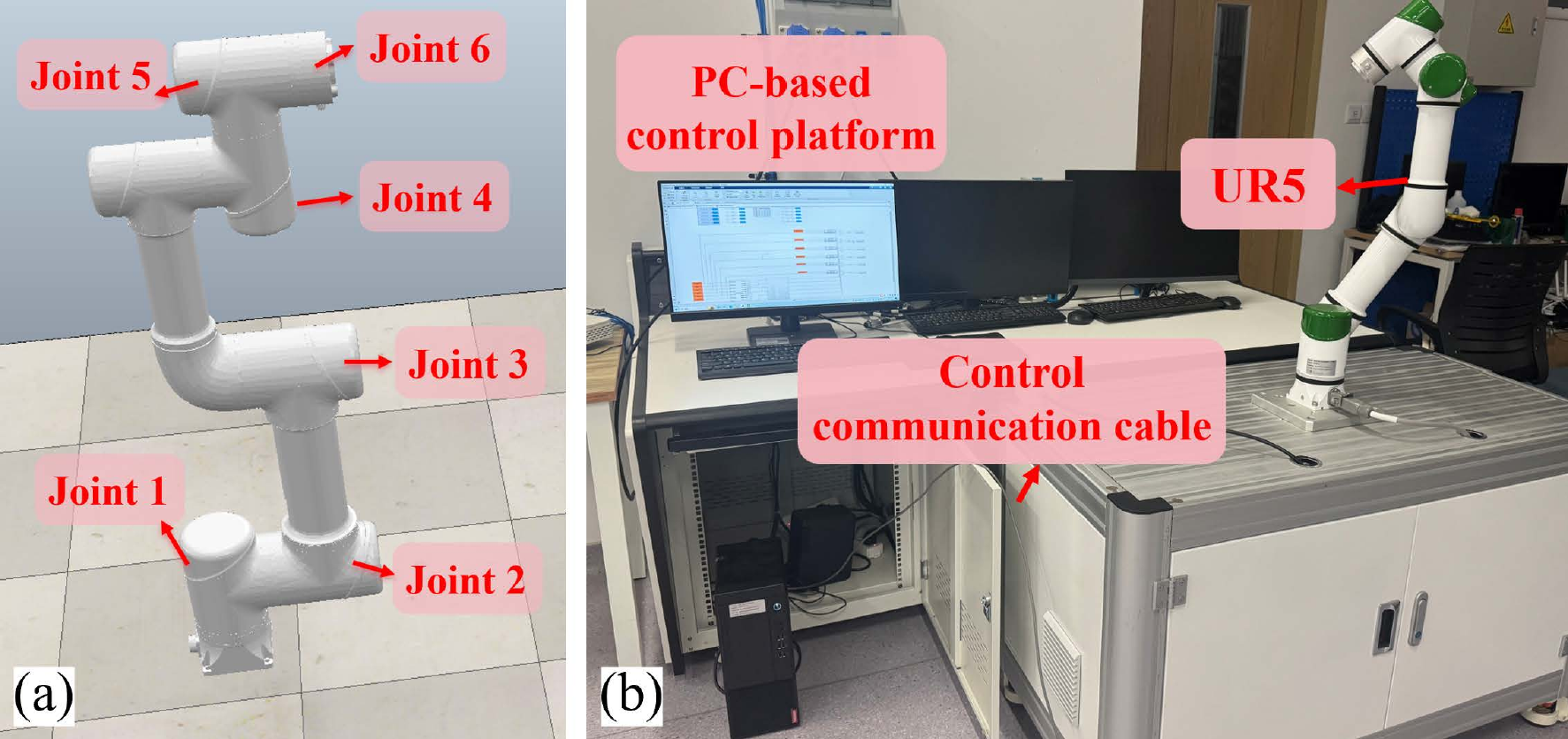}
		\captionsetup{justification=centering}
		\caption{Experimental platforms of the UR5 manipulator. (a) Simulation environment. (b) Hardware prototype.}
		\label{fig:platforms}
	\end{figure}
	
	\subsection{Experimental Condition Design}
	The manipulator parameters were specified as follows: the six links have lengths of $L_1=0.40~\mathrm{m}$, $L_2=0.20~\mathrm{m}$, $L_3=0.20~\mathrm{m}$, 
	$L_4=0.17~\mathrm{m}$, $L_5=0.17~\mathrm{m}$, and $L_6=0.126~\mathrm{m}$, 
	and corresponding masses of $m_1=3.0~\mathrm{kg}$, $m_2=0.5~\mathrm{kg}$, $m_3=0.5~\mathrm{kg}$, 
	$m_4=0.5~\mathrm{kg}$, $m_5=0.5~\mathrm{kg}$, and $m_6=0.4~\mathrm{kg}$. 
	
	To evaluate the generalization capability of the proposed hybrid controller, five operating conditions distinct from those used in data sampling are designed. The reference trajectory of the $k$-th operating condition is defined as $\bm{q}_T^{(k)}(t) \in \mathbb{R}^{6\times 1}$
	over a duration of $T=5~\mathrm{s}$:
	\begin{equation}
		\bm{q}_T^{(k)}(t)
		= \frac{1}{2}\,\bm{Q}^{(k)} \odot
		\Bigl[1+\sin\!\bigl(2 \pi \bm{f}^{(k)}(t - T/2)\bigr)\Bigr],
		\label{eq:variable_trajectory}
	\end{equation}
	where $\bm{Q}^{(k)}, \bm{f}^{(k)}$ are column vectors, and $\odot$ denotes element-wise multiplication.
	
	The operating conditions are defined as follows:
	\(
	\bm{Q}^{(1)} = [-\tfrac{\pi}{4},~ \tfrac{\pi}{6},~ -\tfrac{\pi}{6},~ -\tfrac{\pi}{4},~ \tfrac{\pi}{2},~ \tfrac{\pi}{2}]^\mathsf{T},\)
	\(
	~ \bm{f}^{(1)} = 0.1\,\mathbf{1}_6,
	\)
	where $\mathbf{1}_6$ denotes a six-dimensional column vector of ones;
	\(
	\bm{Q}^{(2)} = -\bm{Q}^{(1)}\)
	\(
	~ \bm{f}^{(2)} = \bm{f}^{(1)};
	\)
	\(
	\bm{Q}^{(3)}
	= [-\tfrac{\pi}{3},~ \tfrac{\pi}{4},~ -\tfrac{\pi}{6},~ -\tfrac{\pi}{6},~ \tfrac{\pi}{3},~ \tfrac{\pi}{2}]^\mathsf{T},\)
	\(
	\bm{f}^{(3)} = \bm{f}^{(1)};
	\)
	\(
	\bm{Q}^{(4)}
	= [\tfrac{\pi}{2},~ \tfrac{\pi}{2},~ \tfrac{\pi}{2},~ \tfrac{\pi}{2},~ \tfrac{\pi}{2},~ \tfrac{\pi}{2}]^\mathsf{T},\)
	\(
	~ \bm{f}^{(4)} = 0.05\,\mathbf{1}_6;
	\)
	and
	\(
	\bm{Q}^{(5)}
	= -\bm{Q}^{(4)}
	\)
	\(
	~ \bm{f}^{(5)} = \bm{f}^{(4)}.
	\)
	An impulsive torque disturbance 
	$\bm{\tau}_{\mathrm{dist}} = [\,1,~1,~5,~5,~10,~10\,]^\mathsf{T}~\mathrm{Nm}$
	was applied at $t = 2~\mathrm{s}$ to evaluate the robustness of the controller.
	Unless otherwise specified, all compared schemes share the same operating conditions, constraints, and disturbance profile.
	
	\subsection{Controller Parameter Design}
	The parameter settings of the FB controllers are summarized below, and closed-loop stability is verified uniformly based on \textit{Theorem~\ref{thm:stability}}, using Lyapunov parameters $\bm P = \mathrm{diag}(25,150,65,25,2,1)$ and $\beta = 10^{-6}$ across all cases.
	
	The PD controller is characterized by proportional and derivative gains $\bm{K_p} = \bm{P}$ and $\bm{K_d} =\mathrm{diag}(1,5,2,0.6,0.1,0.05)$. Since the minimum eigenvalue condition 
	$\lambda_{\min}\!\big(\tfrac{\partial\bm\phi}{\partial\dot{\bm e}}\big)
	=\lambda_{\min}(\bm K_d)
	>\beta\lambda_{\max}(\bm M){\approx}0.04$ 
	ensures sufficient energy dissipation and effectively mitigates inertial and Coriolis coupling, yielding $\dot V<0$.
	The PID controller shares the same proportional and derivative gains as the PD controller, while an additional integral gain $\bm{K_i} =\mathrm{diag} (0.3,\;0.4,\;0.3,\;0.1,\;0.03,\;0.03)$ is introduced.
	Since the stability margin $\alpha_2{\approx}0.5$, the condition $\|\bm K_i\|_2{<}\sqrt{2}\alpha_2{\approx}0.7$ is satisfied, yielding $\dot V<0$.
	The ADRC controller is configured with the nominal input gain $\bm b_0 = [0.015,\;0.125,\;0.070,\;0.025,\;0.008,\;0.0001]^\mathsf{T}$, the observer bandwidth $\omega_o = 50$, and the control bandwidth $\omega_c = 20$. The ESO feedback gains are set as $\bm{\beta}_{1,2,3} = \{3\omega_o,\;3\omega_o^2,\;\omega_o^3\}$, where $\omega_o$ determines the disturbance estimation speed. Since $\omega_o/\omega_c>1$, estimation errors remain bounded and all inequalities hold,
	yielding $\dot V<0$.
	The $H_\infty$ controller employs uniform weighting coefficients $\bm{\alpha}_{\mathrm{H\infty}} = [0.2,\;0.2,\;0.2,\;0.2,\;0.2,\;0.2]^\mathsf{T}$, with sensitivity and control weighting functions defined as $\bm W_s = \mathrm{makeweight}(10,1,0.01)$ and $\bm W_u = \mathrm{makeweight}(0.1,1,10)$, respectively. The closed-loop system satisfies the stability constraint defined by the Riccati equation, yielding $\dot V<0$.
	The SMC controller is designed with sliding surface gains $\bm\lambda = [5,\;3,\;3,\;5,\;2,\;0.05]^\mathsf{T}$, switching gains $\bm k = [0.05,\;0.10,\;0.10,\;0.15,\;0.20,\;0.005]^\mathsf{T}$, and boundary layer widths $\bm{\varepsilon} = 0.02\,\mathbf{1}_6$. Since $\bm k$ ensures $s_i\dot{s}_i<0$, it provides sufficient margin to counter coupling effects,
	yielding $\dot V<0$.
	The MRAC controller is configured with the adaptation rate vector $\bm\alpha_{\mathrm{adapt}} = [0.5,\;0.5,\;0.5,\;0.5,\;0.2,\;0.2]^\mathsf{T}$, the learning gain vector $\bm\gamma = [5,\;5,\;4,\;3,\;1,\;0.2]^\mathsf{T}$, and the positive definite adaptation gain matrix $\bm\Lambda = \mathrm{diag}(20,\,20,\,15,\,10,\,6,\,2)$. The adaptive update ensures $\dot V=-\bm s^\top\bm\Lambda\bm s\le0$, yielding $\dot V<0$.
	
	Parameter settings for the MPC optimizer are summarized below.
	The selection vector is uniformly defined as $\bm{w}^{\mathrm{sel}} = [\,0.5{:}0.1{:}1.5\,]$ and applied identically to all six joints. The prediction horizon is set to $T_N = 5$, and the temporal weighting matrices $\bm{W}^{\mathrm{pre}}$ are constructed following a linearly decreasing profile. Specifically, for each prediction step $k = 1,\dots,T_N$, a scalar temporal weight $w_k^{\mathrm{pre}} = 1 - 0.05(k-1)$ is assigned, with the corresponding joint-wise matrix defined as $\bm{w}_k^{\mathrm{pre}} = w_k^{\mathrm{pre}}\bm{I}_6$, indicating uniform temporal weighting across all joints.
	
	\subsection{Simulation and Experimental Validation}
	\begin{figure}[t]
		\centering
		\includegraphics[width=0.6\columnwidth]{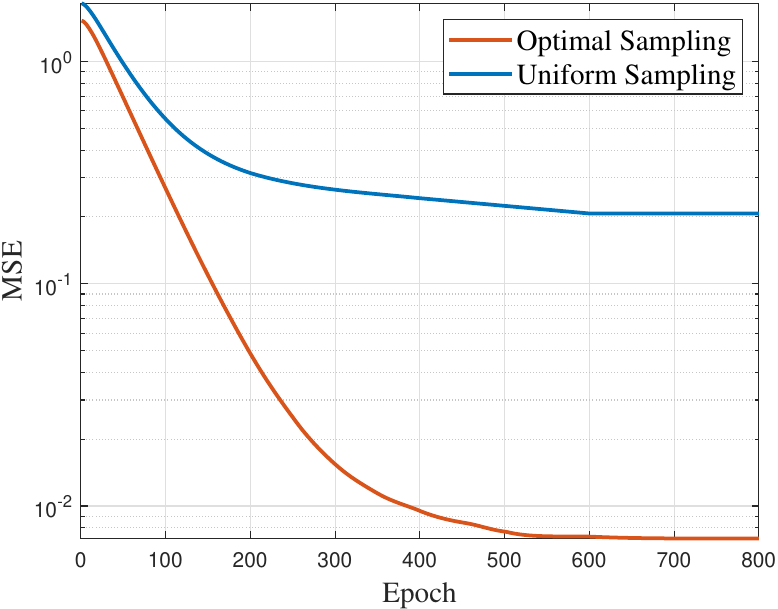}
		\caption{Training performance under optimal and uniform sampling strategies.}
		\label{fig:TrainingSummary}
	\end{figure}
	
	\begin{figure*}[t]
		\centering
		\includegraphics[width=0.95\textwidth,height=0.75\textheight,keepaspectratio]{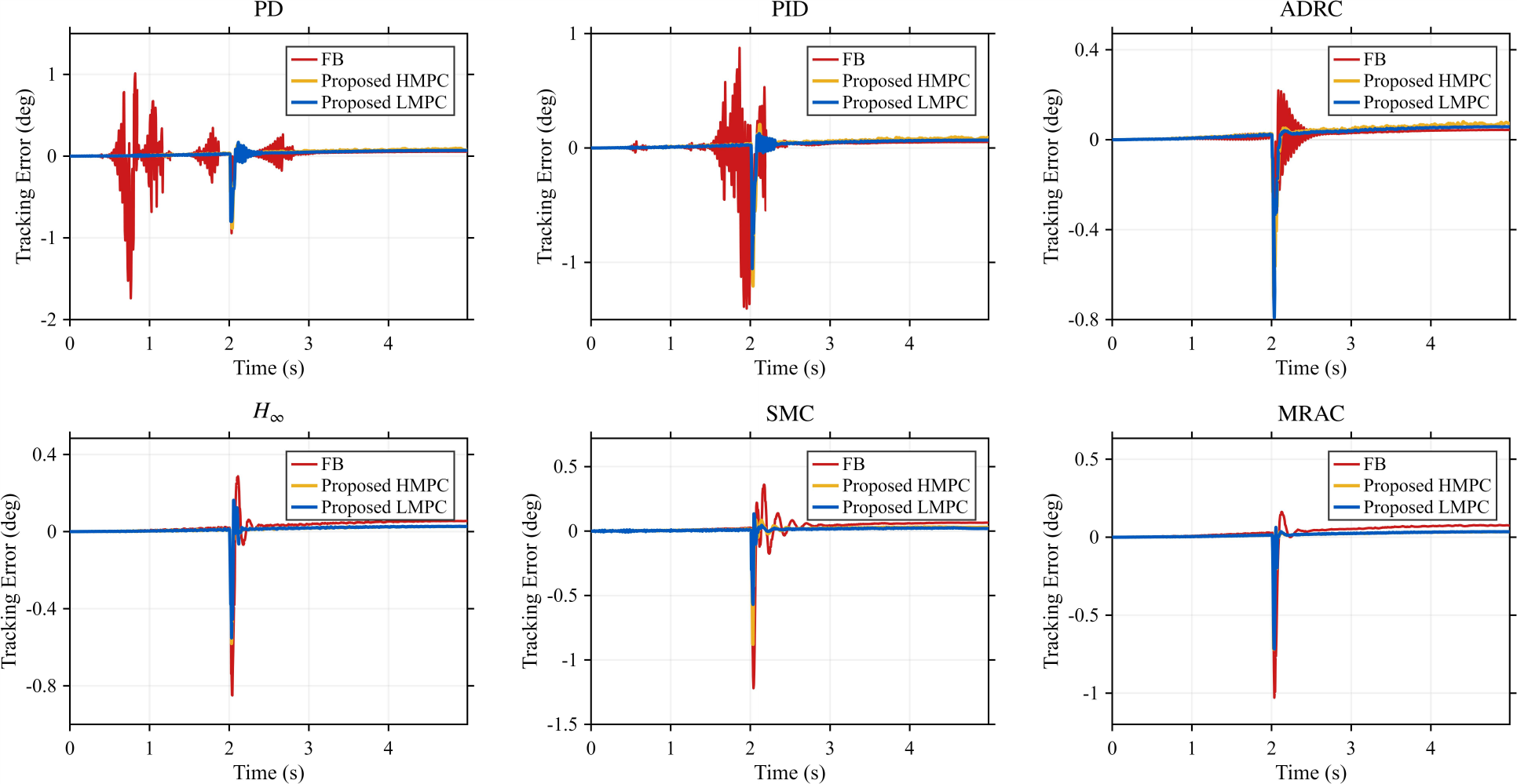}
		\caption{Average joint tracking error profiles for different controllers in simulation.}
		\label{fig:simulation_results}
	\end{figure*}
	
	\begin{figure*}[t]
		\centering
		\includegraphics[
		width=0.95\textwidth,
		height=0.9\textheight,
		keepaspectratio
		]{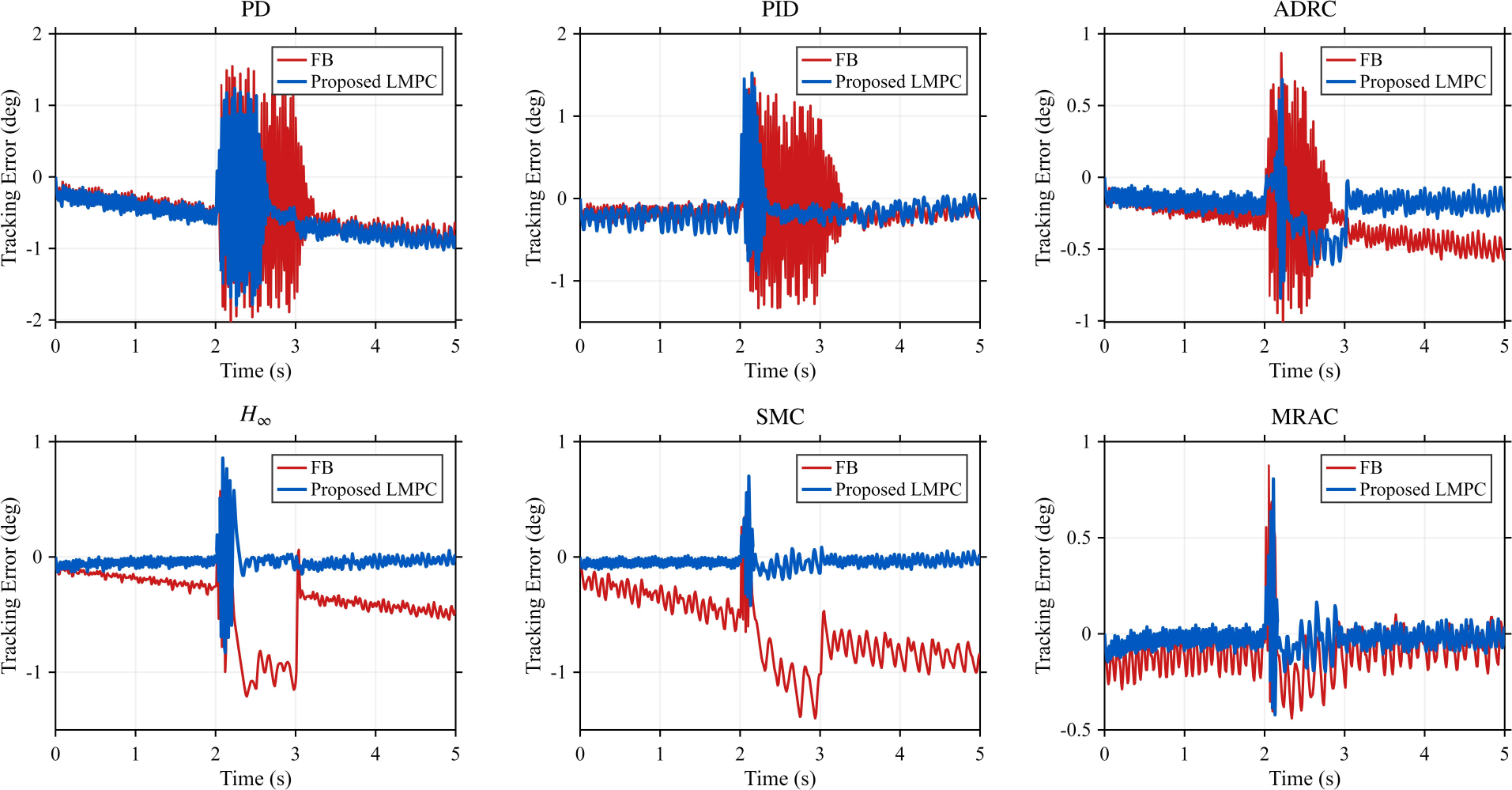}
		\vspace{-4pt}
		\caption{Experimental validation on the hardware prototype.}
		\label{fig:hardware_results}
	\end{figure*}
	Control performance was evaluated using six normalized ``smaller-is-better'' metrics: RMSE and MAE for tracking accuracy, P95 and Peak for transient reliability, settling time within 1° for convergence speed, and dEdt\_RMS for motion smoothness. 
	$\bm{w} = [\,w_{\mathrm{RMSE}},~ w_{\mathrm{MAE}},~ w_{\mathrm{P95}},~ w_{\mathrm{Peak}},~ w_{\mathrm{Settle}},~ w_{\mathrm{dEdt}}\,] =[0.1,\,0.1,\,0.1,\,0.1,\,0.3,\,0.3]$.
	Each component of the augmented state vector is constrained within its physically feasible operating range. 
	The joint position vector $\bm{q}$ is bounded within 
	$\pm[\,\pi,~0.8\pi,~0.8\pi,~\pi,~\pi,~\pi\,]^\mathsf{T}~\mathrm{rad}$, 
	the joint velocity vector $\dot{\bm{q}}$ is limited to 
	$\pm0.81 \pi \mathbf{1}_6 ~\mathrm{rad/s}$. 
	and the feedback torque vector $\bm{\tau}_{\mathrm{fb}}$ is saturated at 
	$\pm[\,102,~102,~66,~34,~34,~34\,]^\mathsf{T}~\mathrm{Nm}$. 
	Since the sampled data are obtained directly from physically consistent simulation trajectories, the error $\bm{e}$ and its derivative $\dot{\bm{e}}$ naturally satisfy the admissible bounds without additional constraint specification.
	
	The trajectory is divided into three regions: the high-priority window (2--3~s) immediately after the disturbance with a tight tolerance of $\delta_1{=}0.05$, the medium-priority phase (0--2~s) corresponding to the manipulator start-up stage with $\delta_2{=}0.15$; and the low-priority phase (3--5~s) representing steady-state recovery with a relaxed tolerance of $\delta_3{=}0.30$.
	
	We first validate the proposed sampling strategy as a training-efficiency mechanism for the LMPC emulator, independent of the control-layer comparison.
	To obtain a sufficient and valid dataset, 100 distinct operating conditions are designed, and the input data of six feedback control algorithms under these conditions are collected to construct the raw dataset~$\mathcal{D}_{\mathrm{input}}$, containing a total of 600{,}000 samples.
	The required sampling budget is set to 100{,}000.  
	According to \textit{Theorem~\ref{thm:optimal_sampling}}, the recommended sampling distribution is determined as $\boldsymbol{w}^* = [0.4475,\; 0.3407,\; 0.2118]^\mathsf{T}$, which defines the corresponding data allocation strategy.  
	Fig.~\ref{fig:TrainingSummary} shows that the proposed sampling allocation accelerates emulator convergence and improves the final prediction accuracy under the same training budget, which supports the use of LMPC as a practical surrogate of HMPC.
	
	Simulation results demonstrate a consistent structural performance gain of the proposed framework across all baseline feedback controllers.
	Fig.~\ref{fig:simulation_results} illustrates the averaged joint tracking error trajectories under the unified framework, while Table~\ref{tab:simulation_scores} summarizes the corresponding normalized scores and execution times, where FB denotes the feedback-only configuration, HMPC denotes the MPC-enhanced configuration, and LMPC denotes its learning-based realization.
	Accordingly, the following performance metrics and computational indicators are adopted:
	\begin{itemize}
		\setlength{\itemsep}{1pt}
		\setlength{\topsep}{2pt}
		\setlength{\leftmargin}{1.2em}
		\item $\boldsymbol{\eta_{\mathrm{H}},\,\eta_{\mathrm{L}}}$: relative performance improvements of HMPC and LMPC over FB.
		\item $\boldsymbol{t_{\mathrm{H}},\,t_{\mathrm{L}}}$: average execution time per control cycle of HMPC and LMPC.
	\end{itemize}

	For all six feedback laws, enabling the proposed MPC layer (HMPC) yields a clear reduction in the composite score, indicating faster convergence, reduced oscillations, and improved tracking accuracy.
	The largest improvements are observed for PD and PID controllers, with score reductions of 65.6\% and 52.7\%, respectively, while ADRC, $H_{\infty}$, SMC, and MRAC achieve improvements ranging from 25.2\% to 45.6\%.
	On average, the HMPC configuration attains a 44.6\% performance gain over the feedback-only baseline.
	Furthermore, the learning-based realization (LMPC) preserves most of the performance improvement of HMPC while significantly reducing the per-cycle execution time, supporting its suitability for real-time deployment.	
	
	\begin{table}[t]
		\centering
		\caption{Composite scores and execution time comparison in simulation.}
		\label{tab:simulation_scores}
		\begin{tabular}{lccccc}
			\toprule
			{Controller} &
			{$\mathrm{FB}$} &
			{$\eta_{\mathrm{H}}$ (\%)} &
			{$\eta_{\mathrm{L}}$ (\%)} &
			{$t_{\mathrm{H}}$ (ms)} &
			{$t_{\mathrm{L}}$ (ms)} \\
			\midrule
			PD         & 0.850 & 65.6 & 67.3 & 80.696 & 6.128 \\
			PID        & 0.742 & 52.7 & 59.7 & 89.158 & 7.524 \\
			ADRC       & 0.443 & 45.6 & 47.4 & 93.207 & 8.387 \\
			{$H_{\infty}$} & 0.257 & 40.1 & 37.4 & 101.874 & 9.052 \\
			SMC        & 0.358 & 38.3 & 53.1 & 84.633 & 5.890 \\
			MRAC       & 0.266 & 25.2 & 24.8 & 102.537 & 9.557 \\
			\bottomrule
		\end{tabular}
	\end{table}

	Hardware experiments focus on LMPC to validate deployment feasibility and cross-domain effectiveness.
	Table~\ref{tab:hardware_scores_final} and Fig.~\ref{fig:hardware_results} show that LMPC retains the robustness and fast-settling behaviors observed in simulation, while running in real time on the prototype platform, thereby closing the loop from HMPC optimality to LMPC practicality.	
	
	\begin{table}[t]
		\centering
		\caption{Composite Scores on Hardware Prototype}
		\label{tab:hardware_scores_final}
		\sisetup{
			round-mode=places,
			round-precision=3,
			table-format=1.6,
			detect-weight=true,
			detect-inline-weight=math,
			zero-decimal-to-integer=true,
			retain-zero-exponent=false,
			round-pad=false,
			table-number-alignment=center,
			table-align-text-post=false
		}
		\begin{tabular}{l S S S}
			\toprule
			\multicolumn{1}{c}{Controller} &
			\multicolumn{1}{c}{FB} &
			\multicolumn{1}{c}{LMPC} &
			\multicolumn{1}{c}{\textrm{$\eta_{\mathrm{L}}$ (\%)}} \\
			\midrule
			PD        & 0.997 & 0.661 & 33.7 \\
			PID       & 0.728 & 0.246 & 66.2 \\
			ADRC      & 0.278 & 0.124 & 76.9 \\
			$H_\infty$& 0.234 & 0.090 & 61.5 \\
			SMC       & 0.245 & 0.007 & 97.1 \\
			MRAC      & 0.046 & 0.013 & 71.7 \\
			\bottomrule
		\end{tabular}
	\end{table}

	\section{Conclusion}\label{sec:conclusion}
	
	In this paper, a unified hybrid control architecture for high-dimensional nonlinear robotic manipulators is proposed.
	The proposed framework tightly integrates model predictive optimization with feedback regulation, and explicit stability conditions are derived to guarantee closed-loop performance.
	To enable efficient real-time deployment, an ML-based torque emulator is developed as a data-driven implementation of the proposed architecture.
	The resulting control framework effectively alleviates the computational burden associated with high-dimensional nonlinear systems while significantly improving overall tracking performance.
	Both simulation and hardware experiments validate the effectiveness, real-time feasibility, and practical applicability of the proposed approach.

	\appendices
	
	\section{The Proof of Theorem~\ref{thm:stability}}\label{app:stability}
	\begin{proof}
	Using the closed-loop error dynamics and the standard skew-symmetric property $\dot{\bm M}(\bm q)-2\bm C(\bm q,\dot{\bm q})=\bm A(\bm q,\dot{\bm q})$, $\bm A^{\mathsf T}=-\bm A$, we have $\frac{1}{2}\dot{\bm e}^{\mathsf T}\dot{\bm M}\dot{\bm e} = \dot{\bm e}^{\mathsf T}\bm C\dot{\bm e}$.
	Consequently, differentiating
	\eqref{eq:lyapunov} and collecting terms yield
	\begin{equation}
		\begin{aligned}
			\dot V = & -(\dot{\bm e}^{\mathsf T} + \beta\bm e^{\mathsf T}) \bm\phi(\bm E(t),t) + \bm e^{\mathsf T}\bm P\dot{\bm e} + \beta\dot{\bm e}^{\mathsf T}\bm M\dot{\bm e} + \dot{\bm e}^{\mathsf T}\bm C\dot{\bm e} \\
			& + (\dot{\bm e}^{\mathsf T} + \beta\bm e^{\mathsf T}) (\bm C\dot{\bm q} + \bm G + \bm M\ddot{\bm q}_d) + 2\beta\bm e^{\mathsf T}\bm C\dot{\bm e} + \beta\bm e^{\mathsf T}\bm A\dot{\bm e}
		\end{aligned}
	\end{equation}
	where $\dot{\bm q}=\dot{\bm q}_d-\dot{\bm e}$ has been used.
	
	Since the stability analysis is local around $\bm E=\bm 0$, any auxiliary/higher-order error
	features included in $\bm E$ (e.g., integral or filtered components) enter $\bm\phi(\bm E,t)$ as
	$o(\|\bm E\|)$ and thus contribute only $o(\|\bm E\|^2)$ to $\dot V$.
	Hence, it suffices to retain the first-order approximation of $\bm\phi$ at $\bm E=\bm 0$, i.e.,
	\begin{equation}\label{eq:taylor_f_general1_app}
		\bm\phi(\bm E(t),t)=\bm F_e\bm e+\bm F_d\dot{\bm e}+o(\|\bm E\|),
	\end{equation}
	where $\bm F_e\triangleq \left.\frac{\partial \bm\phi}{\partial \bm e}\right|_{\bm E=\bm 0}$
	and $\bm F_d\triangleq \left.\frac{\partial \bm\phi}{\partial \dot{\bm e}}\right|_{\bm E=\bm 0}$.
	In addition, expanding the gravity vector at $\bm q_d$ gives
	\begin{equation}\label{eq:taylor_g_app}
		\bm G(\bm q)=\bm G(\bm q_d-\bm e)=\bm G_d-\bm G_d'\bm e+o(\|\bm e\|),
	\end{equation}
	with $\bm G_d'\triangleq \left.\frac{\partial \bm G}{\partial \bm q}\right|_{\bm q_d}$.
	\vspace{0.1cm}
	
	Substituting \eqref{eq:taylor_f_general1_app}--\eqref{eq:taylor_g_app} together with
	$\dot{\bm q}=\dot{\bm q}_d-\dot{\bm e}$ into the above expression of $\dot V$ and neglecting
	$o(\|\bm E\|^2)$ terms, $\dot V$ becomes a quadratic form in $(\dot{\bm e},\bm e)$ plus a remainder.
	Specifically, it can be organized as
	\begin{equation}\label{eq:matrix_form_app}
		\dot{V} =
		\begin{bmatrix} \dot{\bm{e}} \\ \bm{e} \end{bmatrix}^{\mathsf{T}}
		\bm{H}
		\begin{bmatrix} \dot{\bm{e}} \\ \bm{e} \end{bmatrix}
		+ \bm{D}(\dot{\bm q}_d,\ddot{\bm q}_d,\bm G_d,\bm E(t)),
	\end{equation}
	where $\bm H$ is symmetric by construction and has the block structure
	$
	\bm H=
	\begin{bmatrix}
		\bm H_{11} & \bm H_{12}\\
		\bm H_{12}^{\mathsf T} & \bm H_{22}
	\end{bmatrix},
	\quad
	\bm H_{11}=-\bm F_d+\beta \bm M,\quad
	\bm H_{22}=-\beta \bm F_e+\bm G_d',
	\quad
	\bm H_{12}=-\tfrac{1}{2}\!\left(\bm F_e+\beta \bm F_d^{\mathsf T}-\bm P-\beta \bm C^{\mathsf T}\right),
	$
	and the disturbance term $\bm{D}(\dot{\bm{q}}_d, \ddot{\bm{q}}_d, \bm{G}_d, \bm{E}(t))$ aggregates the effects of reference trajectory dynamics, nominal gravity forces, skew-symmetric coupling, and higher-order error features beyond the dominant terms.
	
	According to the \textit{Diagonal Dominance Theorem}, the stability conditions are derived as follows: Eq.~\eqref{eq:stability_cond1} ensures that the velocity feedback provides sufficient damping to counteract inertial and Coriolis effects, while Eq.~\eqref{eq:stability_cond2} guarantees adequate position feedback stiffness to overcome gravitational and dynamic coupling.
	According to \textit{the Gelfand Circle Theorem}, the off-diagonal constraint given by Eq.~\eqref{eq:stability_cond3} is obtained.
	\end{proof}

	\section{Proof of Theorem~\ref{thm:optimal_sampling}}\label{app:sampling}
	\begin{proof}
	The augmented state vector of the manipulator under closed-loop dynamics satisfies $\dot{\bm{s}}(t)=\bm{f}(\bm{s}(t),t)$.
	Derived from the \textit{Area/Length Formula} and \textit{Rademacher's theorem}, 
	the geometric measure of a trajectory segment is given by:
	\begin{equation}
		\mu_s(\mathcal{S}_i)
		=\int_{t_i}^{t_{i+1}}\!\!\|\dot{\bm{s}}(t)\|\,dt,
		\label{eq:length_measure}
	\end{equation}
	where $\mu_s(\mathcal{S}_i)$ denotes the geometric measure of the trajectory segment $\mathcal{S}_i$.
	
	Stability in closed-loop systems guarantees bounded and smooth.
	According to the \textit{Mean Value Theorem for Integrals}, there exists a point $\xi_i \in [t_i, t_{i+1}]$ such that
	\begin{equation}
		\mu_s(\mathcal{S}_i)
		= \|\dot{\bm{s}}(\xi_i)\|(t_{i+1} - t_i),
		\label{eq:mean_value_measure}
	\end{equation}
	
	Assume that $\bm{\tau}^\ast$ is $L$-Lipschitz continuous on $\mathcal{S}$.
	Along the closed-loop trajectory, its temporal derivative satisfies
	$\big|\tfrac{d}{dt}\bm{\tau}^\ast(\bm{s}(t))\big|\le L|\dot{\bm{s}}(t)|$,
	indicating that the rate of policy variation is directly proportional to the state transition velocity.
	Consequently, in each region $\mathcal{S}_i$, the temporal variation rate of the expert policy can be characterized by the representative velocity is defined as $v_i \triangleq \|\dot{\bm{s}}(\xi_i)\|$.
	
	Following \textit{standard generalization-error allocation principles}, regions with larger $v_i$ are assigned smaller admissible tolerances.
	Accordingly, the effective margin $(\delta_i - L\Delta_i)$, where $\Delta_i$ denotes the geometric diameter of $\mathcal{S}_i$, is designed as a strictly decreasing function of $v_i$.
	A convenient and analytically tractable realization is given by
	$\delta_i - L\Delta_i = cv_i^{-\alpha}$ with $c>0$ and $\alpha>0$.
	Substituting the margin design into the definition of the regional difficulty coefficient yields
	$\gamma_i = (c v_i^{-\alpha} + \varepsilon)^{-p}$.
	For sufficiently small $\varepsilon$, the regional difficulty coefficient scales proportionally with the representative velocity, i.e., $\gamma_i \propto v_i^{\alpha p}$,
	indicating that $\gamma_i$ increases monotonically with $v_i$ and can therefore serve as an effective proxy for the local transition rate and policy variation intensity.
	Consequently, the sampling importance coefficient is defined as $\rho_i = \gamma_i A_i$,
	where $A_i$ denotes the temporal proportion coefficient of region $\mathcal{S}_i$.
	
	According to the asymptotic generalization bound derived from the \textit{Central Limit Theorem} and \textit{Hoeffding-type inequalities}, the leading term of the aggregate generalization error, for a fixed sampling budget $N_{\text{total}}$, can be expressed as
	\begin{equation}
		\mathcal{E}_{\text{lead}}(\bm{w})
		=\frac{C}{\sqrt{N_{\text{total}}}}
		\sum_{i=1}^{K}\frac{\rho_i}{\sqrt{w_i}},
		\quad
		\text{s.t. } \sum_{i=1}^{K} w_i = 1,\; w_i>0,
		\label{eq:error_bound_A}
	\end{equation}
	where $C>0$ is a constant determined by the underlying learning problem.
	
	Since the factor $C/\sqrt{N_{\text{total}}}$ is a positive constant independent of $\bm{w}$, and the normalization $\sum_{i=1}^{K} w_i = 1$ fixes the total sampling budget, the dependence of the overall error bound on $\bm{w}$ arises entirely from the summation term $\sum_{i=1}^{K}\rho_i / \sqrt{w_i}$. Therefore, minimizing \eqref{eq:error_bound_A} is equivalent to the convex optimization problem
	\begin{equation}
		\min_{\bm{w}>0,\;\sum w_i=1} F(\bm{w})=\sum_{i=1}^{K}\rho_i w_i^{-1/2}.
		\label{eq:convex_problem_A}
	\end{equation}
	
	Because $f(w)=w^{-1/2}$ is convex on $(0,\infty)$, $F(\bm{w})$ is strictly convex and admits a unique global minimum. The associated Lagrangian is $\mathcal{L}(\bm{w},\lambda) = \sum_{i=1}^{K}\rho_i w_i^{-1/2} + \lambda(\sum_{i=1}^{K}w_i-1)$, where $\lambda$ enforces the normalization constraint. Setting $\partial\mathcal{L}/\partial w_i=0$ yields $-\frac{1}{2}\rho_i w_i^{-3/2}+\lambda=0$, from which we obtain $w_i=(\rho_i/2\lambda)^{2/3}$. By applying the constraint $\sum_{i=1}^{K}w_i=1$, the optimal weight is obtained as
	\begin{equation}
		w_i^{*} = \frac{\rho_i^{2/3}}{\sum_{j=1}^{K}\rho_j^{2/3}} = \frac{(\gamma_i A_i)^{2/3}}{\sum_{j=1}^{K}(\gamma_j A_j)^{2/3}},
		\label{eq:w_optimal_A}
	\end{equation}
	indicating that regions with larger importance $\gamma_i$ and larger geometric proportion $A_i$ receive proportionally higher sampling weights. Substituting \eqref{eq:w_optimal_A} into \eqref{eq:error_bound_A}, the minimized upper bound becomes
	\begin{equation}
		\mathcal{E}_{\text{total}}(\bm{w}^{*}) \le \frac{C}{\sqrt{N_{\text{total}}}} \left(\sum_{i=1}^{K}(\gamma_i A_i)^{2/3}\right)^{3/2} + \mathcal{O}\left(\tfrac{1}{N_{\text{total}}}\right), 
		\label{eq:E_final_A}
	\end{equation}
	which completes the proof. 
	\end{proof}
	
	\bibliographystyle{Bibliography/IEEEtranTIE}
	\bibliography{Bibliography/IEEEabrv,Bibliography/BIB_xx-TIE-xxxx}
	
	\begin{IEEEbiography}[{\includegraphics[width=1in,height=1.25in,clip,keepaspectratio]{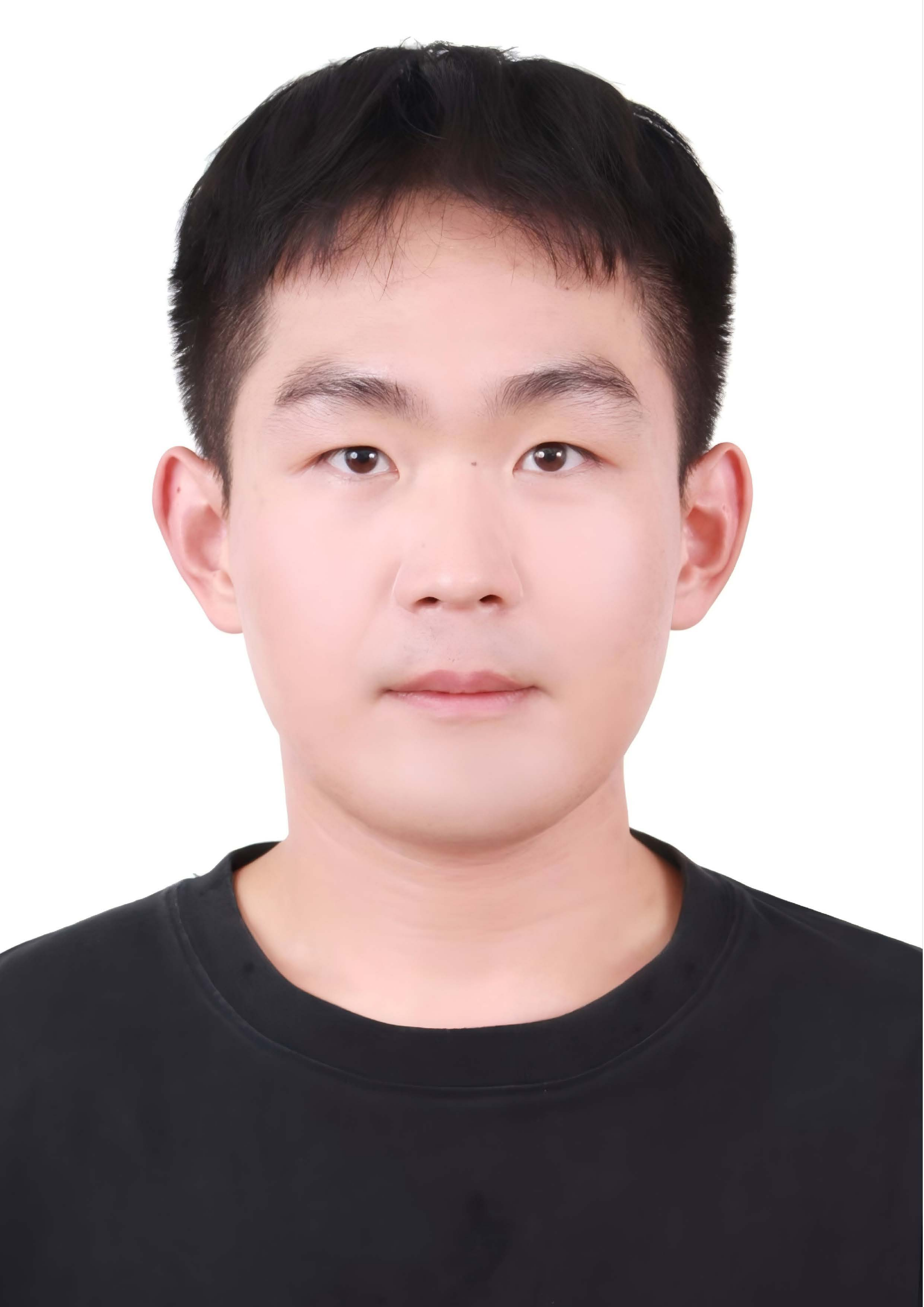}}]{Xinyu Qiao}
		(Student Member, IEEE) was born in Shandong, China, in 1998. He received the M.S. degree in mechanics and aerospace engineering from Dalian University of Technology, Dalian, China, in 2023. He is currently pursuing the Ph.D. degree with the School of Intelligent Engineering, Sun Yat-sen University, Shenzhen, China. 
		His research interests include robotic control and intelligent manipulation systems.
	\end{IEEEbiography}
	
	\begin{IEEEbiography}[{\includegraphics[width=1in,height=1.25in,clip,keepaspectratio]{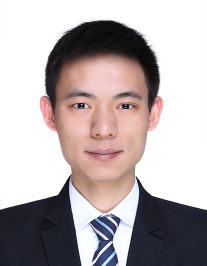}}]{Yongyang Xiong}
		(Member, IEEE) received the B.S. degree in information and computational science, the M.S. degree, and the Ph.D. degree in control science and engineering from Harbin Institute of Technology, Harbin, China, in 2012, 2014, and 2020, respectively. From 2017 to 2018, he was a joint Ph.D. student with the School of Electrical and Electronic Engineering, Nanyang Technological University, Singapore. From 2021 to 2024, he was a Postdoctoral Fellow with the Department of Automation, Tsinghua University, Beijing, China. Since February 2024, he has been an Associate Professor with the School of Intelligent Engineering, Sun Yat-sen University, Shenzhen, China. 
		His current research interests include distributed cooperative control and decision optimization, large-scale machine learning, and multi-robot coordination.
	\end{IEEEbiography}
	
	\begin{IEEEbiography}[{\includegraphics[width=1in,height=1.25in,clip,keepaspectratio]{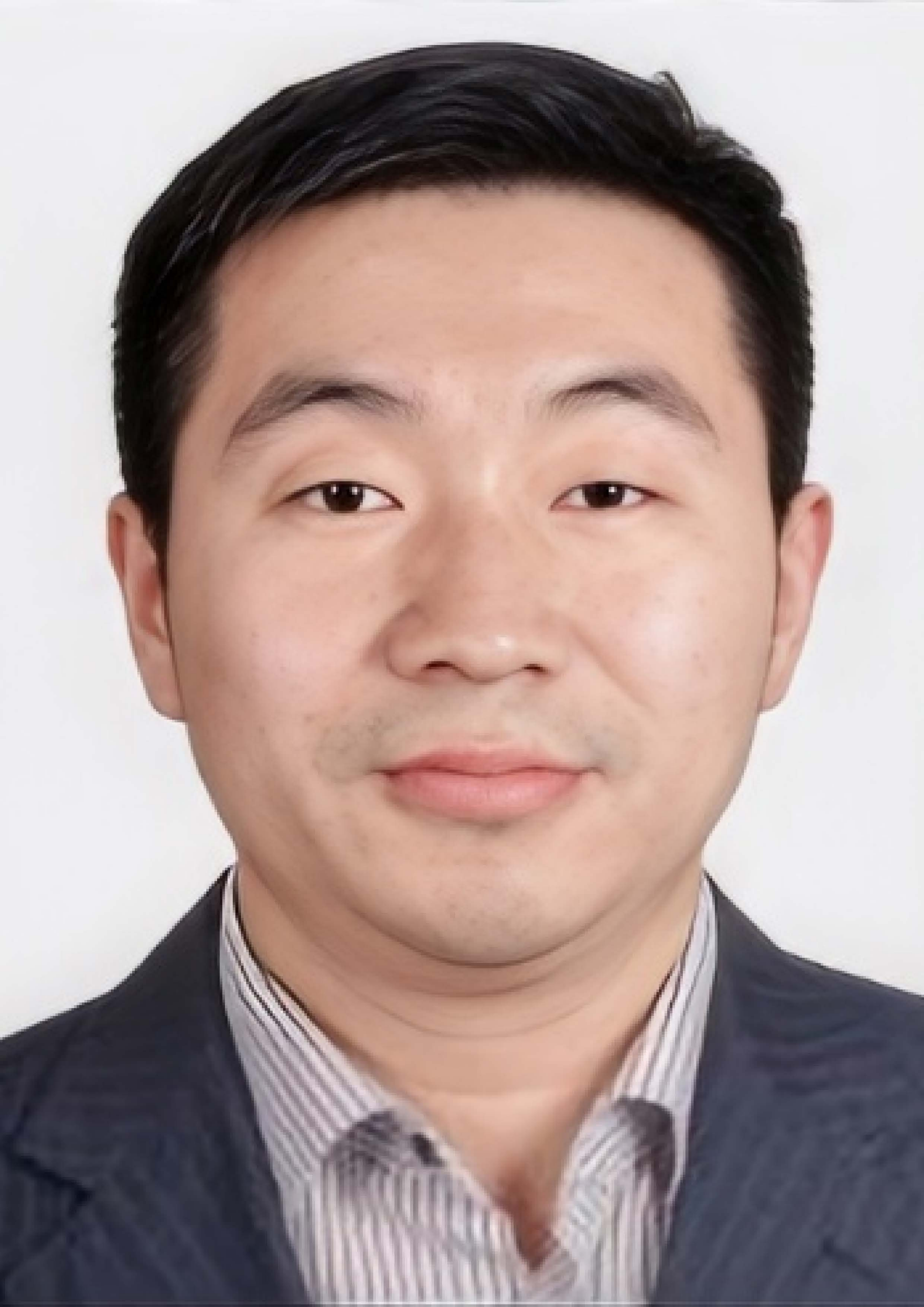}}]{Yu Han}
		received the Ph.D. degree from the Department of Automation, Shanghai Jiao Tong University, Shanghai, China, in 2011. He was a Chief Engineer with the Robot Business Department, Jiangsu Automation Research Institute, Lianyungang, Jiangsu, China. 
		
		He is currently a Professor with the School of Intelligent Systems Engineering, Sun Yat-sen University, Shenzhen, China. His research interests include intelligent manufacturing, robotics, and manufacturing automation of marine equipment.
	\end{IEEEbiography}

		\begin{IEEEbiography}[{\includegraphics[width=1in,height=1.25in,clip,keepaspectratio]{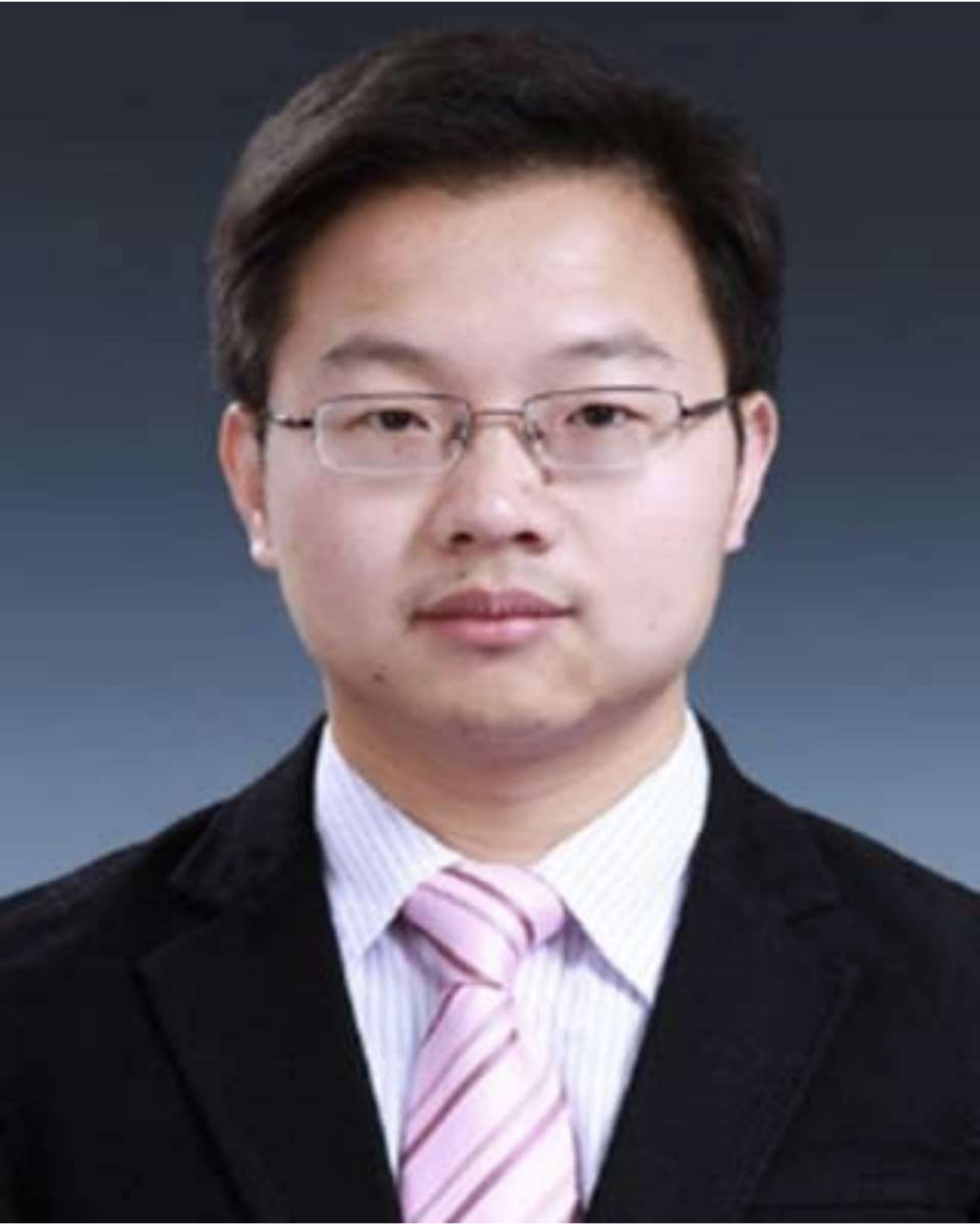}}]{Keyou You}
		(Senior Member, IEEE) received the B.S. degree in statistical science from Sun Yat-sen University, Guangzhou, China, in 2007 and the Ph.D. degree in electrical and electronic engineering from Nanyang Technological University (NTU), Singapore, in 2012. 
		
		After briefly working as a Research Fellow at NTU, he joined Tsinghua University, Beijing, China, where he is currently a Full Professor in the Department of Automation. He held visiting positions with Politecnico di Torino, Turin, Italy, Hong Kong University of Science and Technology, Hong Kong, China, University of Melbourne, Melbourne, Victoria, Australia, and so on. His research interests include the intersections between control, optimization and learning, as well as their applications in autonomous systems. 
		
		Dr. You received the Guan Zhaozhi Award at the 29th Chinese Control Conference in 2010 and the ACA (Asian Control Association) Temasek Young Educator Award in 2019. He received the National Science Funds for Excellent Young Scholars in 2017 and for Distinguished Young Scholars in 2023. He is currently an Associate Editor for \textit{Automatica} and IEEE TRANSACTIONS ON CONTROL OF NETWORK SYSTEMS. 		
	\end{IEEEbiography}
	
\end{document}